\newcommand{\ceil}[1]{\left\lceil{#1}\right\rceil}
\newenvironment{proofof}[1]{\medskip\noindent\emph{Proof of #1. }\ignorespaces}{\hfill$\Box$\medskip\par\noindent\ignorespacesafterend}
\newtheorem{corollary}{Corollary}
\newtheorem{lemma}{Lemma}
\newtheorem{obs}{Observation}
\newtheorem{proposition}{Proposition}
\newtheorem{theorem}{Theorem}
\author{Guillaume Ducoffe\affiliationmark{1,2}\thanks{This work was supported by a grant of Romanian Ministry of Research and Innovation CCCDI-UEFISCDI. project no. 17PCCDI/2018. The article also has enjoyed the support of the Romanian Young Academy, Stiftung Mercator and the
Alexander von Humboldt Foundation.}
  \and Michel Habib\affiliationmark{3}\thanks{Supported by Inria Gang project-team, and ANR project DISTANCIA (ANR-17-CE40-0015).}
  \and Laurent Viennot\affiliationmark{4}\thanks{Supported by Irif laboratory from CNRS and Paris University, and ANR project Multimod (ANR-17-CE22-0016).}}
\title{Fast Diameter Computation within Split Graphs}
\affiliation{
  University of Bucharest, Faculty of Mathematics and Computer Science, Romania\\
  National Institute for Research and Development in Informatics, Romania\\
  Paris University, and IRIF CNRS, France\\
  Inria, and Paris University, France}
\keywords{Graph Diameter, Split Graphs, Interval Representations, Fine-Grained Complexity.}
\begin{document}
\publicationdetails{23}{2021}{3}{11}{6422}
\maketitle
\begin{abstract}
    {\em When can we compute the diameter of a graph in quasi linear time?}
    We address this question for the class of {\em split graphs}, that we observe to be the hardest instances for deciding whether the diameter is at most two.
    We stress that although the diameter of a non-complete split graph can only be either $2$ or $3$, under the Strong Exponential-Time Hypothesis (SETH) we cannot compute the diameter of an $n$-vertex $m$-edge split graph in less than quadratic time -- in the size $n+m$ of the input.
    Therefore it is worth to study the complexity of diameter computation on {\em subclasses} of split graphs, in order to better understand the complexity border.
    %
    Specifically, we consider the split graphs with bounded {\em clique-interval number} and their complements, with the former being a natural variation of the concept of interval number for split graphs that we introduce in this paper.
    We first discuss the relations between the clique-interval number and other graph invariants such as the classic interval number of graphs, the treewidth, the {\em VC-dimension} and the {\em stabbing number} of a related hypergraph.
    Then, in part based on these above relations, we almost completely settle the complexity of diameter computation on these subclasses of split graphs:

    \begin{itemize}
        \item For the $k$-clique-interval split graphs, we can compute their diameter in truly subquadratic time if $k={\cal O}(1)$, and even in quasi linear time if $k=o(\log{n})$ and in addition a corresponding ordering of the vertices in the clique is given. However, under SETH this cannot be done in truly subquadratic time for any $k = \omega(\log{n})$.
        \item For the {\em complements} of $k$-clique-interval split graphs, we can compute their diameter in truly subquadratic time if $k={\cal O}(1)$, and even in time ${\cal O}(km)$ if a corresponding ordering of the vertices in the stable set is given. Again this latter result is optimal under SETH up to polylogarithmic factors.
    \end{itemize}
    Our findings raise the question whether a $k$-clique interval ordering can always be computed in quasi linear time.
    We prove that it is the case for $k=1$ and for some subclasses such as bounded-treewidth split graphs, threshold graphs and comparability split graphs.
    Finally, we prove that some important subclasses of split graphs -- including the ones mentioned above -- have a bounded clique-interval number.
\end{abstract}

\section{Introduction}\label{sec:introduction}

In what follows, all graphs considered are assumed to be connected, unless stated otherwise.
Computing the diameter of a graph (maximum number of edges on a shortest path) is a fundamental problem with countless applications in computer science and beyond.
Unfortunately, the textbook algorithm for computing the diameter of an $n$-vertex $m$-edge graph takes ${\cal O}(nm)$-time.
This quadratic running-time is too prohibitive for large graphs with millions of nodes.
As already noticed in~\cite{ChD92,CDHP01}, an algorithm breaking this quadratic barrier for general graphs is unlikely to exist since it would lead to more efficient algorithms for some disjoint set problems and, as proved in~\cite{RoV13}, the latter would falsify the Strong Exponential-Time Hypothesis (SETH).
This raises the question of {\em when we can compute the diameter faster than ${\cal O}(nm)$.}
By restricting ourselves to more structured graph classes, here we hope in obtaining a finer-grained dichotomy between ``easy'' and ``hard'' instances for diameter computations -- with the former being quasi linear-time solvable ({\it i.e.}, solvable in ${\cal O}(m^{1+\epsilon})$ time for any $\epsilon > 0$) and the latter being impossible to solve in subquadratic time under some complexity assumptions. 

Specifically, we focus in this work on the class of split graphs, {\it i.e.}, the graphs that can be bipartitioned into a clique and a stable set.
-- For any undefined graph terminology, see~\cite{BoM08}. --
This is one of the most basic idealized models of core/periphery structure in complex networks~\cite{BoE00}.
We stress that every split graph has diameter at most three.
In particular, computing the diameter of a non-complete split graph boils down to decide whether this is either two or three.
Nevertheless, under the Strong Exponential-Time Hypothesis (SETH) the textbook algorithm is optimal even for split graphs~\cite{BCH16}.
We observe that the split graphs are in some sense the {\em hardest} instances for deciding whether the diameter is at most two.
For that, let us bipartition a split graph $G$ into a maximal clique $K$ and a stable set $S$; it takes linear time~\cite{Gol04}.
The sparse representation of $G$ is defined as $(K,\{N_G[v] \mid v \in S\})$\footnote{
This sparse representation is also called the neighbourhood set system of the stable set of $G$, see Sec.~\ref{sec:rel}.}.
-- Note that all our linear-time algorithms in this paper run in the size of this above representation. --

\begin{obs}\label{obs:hardness-split}
Deciding whether an $n$-vertex graph $G$ has diameter two can be reduced in linear time to deciding whether a $2n$-vertex split graph $G'$ (given by its sparse representation) has diameter two.
\end{obs}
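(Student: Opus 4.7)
The plan is to build the split graph $G'$ directly from the closed-neighborhood representation of $G$. Given $G = (V,E)$ on $n$ vertices, I would introduce two copies of $V$: a clique $K = \{v' : v \in V\}$ of size $n$, and a stable set $S = \{v_S : v \in V\}$ also of size $n$. For each $v \in V$, set $N_{G'}(v_S) = \{u' \in K : u \in N_G[v]\}$. The graph $G'$ is split by construction, has $2n$ vertices, and its sparse representation $(K,\{N_{G'}[v_S] : v \in V\})$ has total size $\mathcal{O}(n+m)$; it can be written down in one pass over the adjacency lists of $G$, so the reduction runs in linear time.

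For correctness I would do a case analysis on pairs of vertices of $G'$. Any two vertices of $K$ are adjacent, so they are at distance $1$. For $u' \in K$ and $v_S \in S$, even if they are non-adjacent, the vertex $v'$ lies in $K$ (hence is adjacent to $u'$) and is a neighbor of $v_S$ by construction, so $d_{G'}(u',v_S) \le 2$. Consequently the diameter of $G'$ is controlled solely by pairs inside $S$: for $u_S \ne v_S$, the distance $d_{G'}(u_S,v_S)$ is exactly $2$ if $N_G[u] \cap N_G[v] \ne \emptyset$ and is at least $3$ otherwise (since $S$ is stable and the only common neighbors of $u_S,v_S$ lie in $K$).

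To finish, I would observe that for distinct $u,v \in V$ the condition $N_G[u] \cap N_G[v] \ne \emptyset$ is equivalent to $d_G(u,v) \le 2$: a witness $w$ is either $u$ or $v$ (giving $uv \in E$) or a common neighbor, and conversely either case produces such a $w$. Hence $G'$ has diameter at most $2$ if and only if every pair of distinct vertices of $G$ is at distance at most $2$ in $G$, i.e., if and only if $G$ has diameter at most $2$.

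There is no real technical obstacle here; the construction is a textbook-style gadget. The single point to be careful about is the use of \emph{closed} neighborhoods $N_G[v]$ rather than open ones, since this is precisely what makes the equivalence $N_G[u]\cap N_G[v]\ne\emptyset \iff d_G(u,v)\le 2$ hold uniformly and lets the $S$-vs-$K$ case go through via the canonical neighbor $v' \in K$ of $v_S$.
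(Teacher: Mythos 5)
Your construction is exactly the one in the paper: a clique copy of $V$, a stable-set copy of $V$, and an edge between $v_S$ and $u'$ precisely when $u \in N_G[v]$, so the proposal is correct and takes essentially the same approach. The only difference is that you spell out the case analysis and the equivalence $N_G[u] \cap N_G[v] \neq \emptyset \iff d_G(u,v) \le 2$, which the paper leaves implicit under ``by construction.''
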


\begin{proof}
If $G=(V,E)$ then, let $V'$ be a disjoint copy of $V$.
For every $v \in V$ we denote by $v' \in V'$ its copy.
We define $G'=(V \cup V',E')$ where $E' := \{ u'v' \mid u,v \in V \} \cup \{ uv' \mid v \in N_G[u] \}$.
By construction, $G'$ is a split graph with maximal clique $V'$ and stable set $V$.
Furthermore, $diam(G) \leq 2$ if and only if $diam(G') \leq 2$. 
\end{proof}

Since the conference version of this paper, it was also proved in~\cite{DuD21} that computing the diameter of chordal graphs ({\it i.e.}, graphs with no induced cycle of length $> 3$,  a far-reaching generalization of split graphs) could be reduced in randomized quasi linear time to computing the diameter of various split subgraphs given by their sparse representation. This result got used in~\cite{Duc20+} for faster diameter computation within chordal graphs of bounded asteroidal number.
We here address the fine-grained complexity of diameter computation on {\em subclasses} of split graphs.
By the above Observation~\ref{obs:hardness-split} and by~\cite{DuD21}, our results can be applied to the study of the diameter-two problem on general graphs and to the study of the diameter problem within chordal graphs.

\paragraph{Related work.}
Exact and approximate distance computations for chordal graphs have been a common research topic over the last few decades~\cite{BCD99,BDLL04,DoG02,Dra05}.
To the best of our knowledge, this is the first study on the complexity of diameter computation on split graphs.
However, there exist linear-time algorithms for computing the diameter on some other subclasses of chordal graphs such as: interval graphs~\cite{Ola90}, or strongly chordal graphs~\cite{BCD98}.
These results imply the existence of linear-time algorithms for diameter computations on interval split graphs and strongly chordal split graphs, among other subclasses.

Beyond chordal graphs, the complexity of diameter computation has been considered for many graph classes, {\it e.g.}, see~\cite{CDHP01} and the papers cited therein.
In particular, the diameter of general graphs with {\em treewidth} $o(\log{n})$ can be computed in quasi linear time~\cite{AVW16}, whereas under SETH we cannot compute the diameter of split graphs with clique-number (and so, treewidth) $\omega(\log{n})$ in subquadratic time~\cite{BCH16}.
We stress that our two first examples of ``easy'' subclasses, namely: interval split graphs and strongly chordal split graphs have unbounded treewidth.
Our work unifies almost all known tractable cases for diameter computation on split graphs -- and offers some new such cases -- through a new increasing hierarchy of subclasses.

Relatedly, we proved in a companion paper~\cite{DHV20} that on the proper minor-closed graph classes and the bounded-diameter graphs of constant distance VC-dimension (not necessarily split) we can compute the diameter in time ${\cal O}(mn^{1-\varepsilon}) = o(mn)$, for some small $\varepsilon > 0$.
We consider in this work some subclasses of split graphs of constant distance VC-dimension.
However, the time bounds obtained in~\cite{DHV20} are barely subquadratic.
For instance, although the graphs of constant treewidth fit in our framework, our techniques in~\cite{DHV20} do not suffice for computing their diameter in quasi linear time.
In fact, neither are they sufficient to explain why we can compute the diameter in subquadratic time on graphs of superconstant treewidth $o(\log{n})$.
Unlike~\cite{DHV20} this article is a new step toward characterizing the graph classes for which we can compute the diameter in quasi {\em linear time}.

\paragraph{Our results.}
We introduce a new invariant for split graphs, that we call the {\em clique-interval number}. 
Formally, for any $k \geq 1$, a split graph is {\em $k$-clique-interval} if there exists a total ordering $\tau$ of the vertices in the clique so that, for every vertex $v$ in the stable set, the neighbour set $N(v)$ consists of at most $k$ intervals of vertices in the ordering $\tau$.
The clique-interval number of a split graph is the minimum $k$ such that it is $k$-clique-interval.
Although this definition is quite similar to the one of the interval number~\cite{McG77}, we show in Sec.~\ref{sec:rel} that being $k$-clique-interval does not imply being $k$-interval, and vice-versa\footnote{
We observe that the {\em general graphs} with bounded interval number are sometimes called ``split interval''~\cite{BHNS+06}, that may create some confusion.}.
In fact, as we also prove in Sec.~\ref{sec:rel}, the clique-interval number of a split graph is more closely related to the {\em VC-dimension} and the {\em stabbing number} of the neighbourhood set system of its stable set.
Nevertheless, a weak relationship with the interval number can also be derived in this way.

We then study in Sec.~\ref{sec:range-queries} what the complexity of computing the diameter is on split graphs parameterized by the clique-interval number.
It follows from our results in Sec.~\ref{sec:rel} and those in~\cite{DHV20} that on every subclass of {\em constant} clique-interval number, there exists a subquadratic-time algorithm for diameter computation.
Our work completes this general result as, if we assume a total ordering over the clique to be given in the input (showing the input split graph to be $k$-clique-interval for some value $k$), it provides an almost complete characterization of the quasi linear-time solvable instances. 
As a warm-up, we observe in Sec.~\ref{sec:univ-vertex} that on {\em clique-interval} split graphs ({\it a.k.a.}, $1$-clique-interval), deciding whether the diameter is two is equivalent to testing for a universal vertex.
We give a direct proof of this result and another one based on the inclusion of clique-interval split graphs in the subclass of {\em strongly chordal} split graphs.
On the way, we prove more generally -- and perhaps surprisingly -- that for the intersection of split graphs with many interesting graph classes from the literature, having diameter at most two is equivalent to having a universal vertex!   
Then, we address the more general case of $k$-clique-interval split graphs, for $k \geq 2$.
\begin{itemize}
\item Our first main contribution is the following almost dichotomy result (Theorem~\ref{thm:main}).
For every $n$-vertex $k$-clique-interval split graph, we can compute its diameter in quasi linear-time if $k = o(\log{n})$ and a corresponding total ordering of its clique is given. 
This result follows from an all new application of a generic framework based on {\em $k$-range trees}~\cite{Ben79}, that was already used for diameter computations on some special cases~\cite{AVW16,Duc19} but with a quite different approach than ours.
Furthermore, the logarithmic upper bound on $k$ is somewhat tight -- at least if we assume every $k$-clique-interval split graph to be given with a $k$-clique-ordering.
Indeed, we also prove that under SETH  we cannot compute in subquadratic time the diameter of $k$-clique-interval split graphs for $k = \omega(\log{n})$. The complexity of computing the diameter of a $k$-clique-interval split graph, for $k = o(\log{n})$, remains open if we are {\em not} given any total order over the vertices of the clique.

\smallskip
We note that this above result is quite similar to the one obtained in~\cite{AVW16} for treewidth.
Indeed, we observe that every split graph of treewidth $k$ is $k$-clique-interval (and even $\left\lceil \frac k 2 \right\rceil$-clique-interval, see Sec.~\ref{sec:rel}).
We use this easy observation so as to prove our conditional time complexity lower-bound.

\smallskip
\item Then, we focus on the {\em complements} of $k$-clique-interval split graphs --- these are an interesting subclass in their own right since they generalize, {\it e.g.}, interval split graphs. For the latter we get a more straightforward algorithm for diameter computation, with a better dependency in $k$. Indeed, we prove that we can compute the diameter of such graphs in time ${\cal O}(km)$ if a corresponding ordering is given. This result is conditionally optimal up to polylogarithmic factors (at least if we assume every such split graph to be given with a $k$-clique-ordering of its complement) because $k = {\cal O}(n)$ and, under SETH, we cannot compute the diameter of split graphs with $m = \tilde{\cal O}(n)$ edges in subquadratic time~\cite{AVW16}. The complexity of computing the diameter within complements of $k$-clique-interval split graphs remains open if we are not given any total ordering over the vertices in the stable set.
\end{itemize}

It follows from these two above results that having at hands a $k$-clique-interval ordering for a split graph or its complement can help to significantly improve the time complexity for computing its diameter.
We so ask whether such orderings can be computed in quasi linear time, for some small values of $k$.
In Sec.~\ref{sec:recognition} we prove that it is indeed the case for bounded-treewidth split graphs (trivially) and some other dense subclasses of bounded clique-interval number such as comparability split graphs.
Finally our main result in this section is that the clique-interval split graphs can be recognized in linear time.

\medskip
Overall, we believe that our study of $k$-clique-interval split graphs is a promising framework in order to prove, using this parameter, new quasi linear-time solvable special cases for diameter computations on split graphs and beyond.

\smallskip
Results of this paper were partially presented at the COCOA'19 conference~\cite{DHV19}.

\section{Clique-interval numbers and  other graph parameters}\label{sec:rel}

We start by relating the clique-interval number of split graphs with better-studied invariants from Graph Theory and Computational Geometry.

\paragraph{Treewidth.}
First we observe that if $G=(S\cup K,E)$ is a split graph then, for {\em any} total order over $K$ and any $v \in S$, $N_G(v)$ consists of at most $|K|$ intervals of vertices. 
In fact we can improve this rough upper-bound, as follows:

\begin{lemma}\label{lem:bounded-tw}
Every split graph with clique-number (and so, treewidth) at most $k$ is $\ceil{\frac{k}{2}}$-clique-interval.
\end{lemma}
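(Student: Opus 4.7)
The plan is very short: reduce the statement to an easy counting fact about subsets of a totally ordered set. First I would note that since $K$ is itself a clique of $G$, one has $|K| \leq \omega(G) \leq k$. (If $K$ is not a maximum clique in the split partition, then some vertex of $S$ is adjacent to all of $K$, giving $\omega(G) = |K|+1$, which only makes the bound $|K| \leq k$ sharper.) So any total ordering $\tau$ of the vertices of $K$ is an ordering of an at most $k$-element set.

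Next I would prove the following purely combinatorial claim: for any linearly ordered set $X$ of size $\ell$, every subset $A \subseteq X$ is the union of at most $\lceil \ell/2 \rceil$ maximal intervals of $X$. Indeed, if $A$ decomposes into $j$ maximal intervals, then the $j-1$ gaps between consecutive intervals must each contain at least one element of $X \setminus A$, so $\ell \geq j + (j-1) = 2j - 1$, whence $j \leq \lceil \ell/2 \rceil$.

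Applying this claim to $X = K$ (with any fixed ordering $\tau$) and to $A = N_G(v) \cap K = N_G(v)$ for each $v \in S$, we get that $N_G(v)$ consists of at most $\lceil |K|/2 \rceil \leq \lceil k/2 \rceil$ intervals of $\tau$. By the definition of the clique-interval number, this shows that $G$ is $\lceil k/2 \rceil$-clique-interval.

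There is no real obstacle here; the only points requiring a little care are (i) justifying $|K| \leq k$ in both cases of whether $K$ is a maximum clique of $G$ or not, and (ii) making the counting argument on maximal intervals explicit rather than leaving it as "obvious". Notably, the argument works for \emph{any} total ordering of $K$, so no construction of a specific ordering is needed.
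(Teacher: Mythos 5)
Your proof is correct and takes essentially the same approach as the paper: both fix an \emph{arbitrary} total ordering of $K$ and reduce to the combinatorial fact that any subset of an $\ell$-element ordered set is a union of at most $\ceil{\ell/2}$ intervals. The only difference is cosmetic --- the paper proves this fact by a case analysis balancing neighbours against non-neighbours (at most $\max\{k-i,i\}$ intervals, optimized at $i=\ceil{k/2}$), whereas you count elements and gaps directly via $2j-1 \leq \ell$; both are valid elementary counting arguments yielding the same bound.
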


\begin{proof}
Let $G=(K \cup S,E)$ be a split graph with clique-number $|K| = k$.
Let $\tau$ be any total ordering of $K$. For any integer $i\le |K|$, and any $v \in S$, we claim that $N_G(v)$ is the union of at most $\max\{k-i, i\}$ intervals of $\tau$.
Indeed if $v$ has at least $i$ non-neighbours, then it has at most $k-i$ neighbours and so the claim trivially holds.
Otherwise, $v$ has less than $i$ non-neighbours in $K$, but then the other nodes form at most $i$ intervals of $\tau$.
This bound is optimal for $i = \ceil{\frac k 2}$.
\end{proof}

Conversely, since any complete graph is $1$-clique-interval, the clique-interval number cannot be bounded from below by any function of the treewidth.

\paragraph{VC-dimension and Stabbing number.}
For a set system $(X,R)$ ({\it a.k.a.}, range space or hypergraph), we say that a subset $Y \subseteq X$ is {\em shattered} if $\{ Y \cap r \mid r \in R \}$ is the power-set of $Y$. The {\em VC-dimension} of a finite set system is the largest size of a shattered subset.
We now prove an interesting connection between the clique-interval number of a split graph and the VC-dimension of a related set system.

\begin{proposition}\label{prop:vc-dim}
For any split graph $G=(K \cup S, E)$, let ${\cal S} = \{ N_G(u) \mid u \in S \}$.
If $G$ is $k$-clique-interval then $(K,{\cal S})$ has VC-dimension at most $2k$.
\end{proposition}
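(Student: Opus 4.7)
The plan is to argue by contradiction and use the characteristic ``alternation trick'' for range spaces defined by unions of intervals. Suppose, for contradiction, that some $Y \subseteq K$ of size $2k+1$ is shattered by $\mathcal{S}$. Since $G$ is $k$-clique-interval, fix a total ordering $\tau$ of $K$ witnessing this, and label the elements of $Y$ according to $\tau$ as $y_1 < y_2 < \cdots < y_{2k+1}$.

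Next, I would single out the ``odd-indexed'' subset $T = \{y_1, y_3, y_5, \ldots, y_{2k+1}\}$, which has exactly $k+1$ elements. Because $Y$ is shattered, there must exist some vertex $u \in S$ such that $N_G(u) \cap Y = T$. In particular, $u$ is adjacent to every $y_{2i+1}$ for $0 \le i \le k$, but non-adjacent to every $y_{2i}$ for $1 \le i \le k$.

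Now I would exploit the $k$-clique-interval property: $N_G(u)$ decomposes into at most $k$ intervals of $\tau$. For each $i \in \{1,\ldots,k\}$, the vertex $y_{2i}$ lies strictly between $y_{2i-1}$ and $y_{2i+1}$ in $\tau$ and is absent from $N_G(u)$, so $y_{2i-1}$ and $y_{2i+1}$ cannot belong to a common interval of $N_G(u)$. Consequently, the $k+1$ vertices $y_1, y_3, \ldots, y_{2k+1}$ must lie in pairwise distinct intervals of $N_G(u)$, forcing the number of intervals to be at least $k+1$. This contradicts the $k$-clique-interval assumption, proving that no set of size $2k+1$ can be shattered, hence the VC-dimension is at most $2k$.

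There is no real obstacle here; the argument is essentially the standard VC-dimension bound for unions of $k$ intervals on the line, transported to the setting of the ordered clique. The only thing to be slightly careful about is the choice of the ``alternating'' target subset $T$, which is what forces one interval per odd-indexed element; all other subsets of $Y$ need fewer intervals and do not yield a contradiction on their own.
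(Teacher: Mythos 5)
Your proof is correct and follows essentially the same route as the paper: both arguments shatter a hypothetical $(2k+1)$-set, select the vertex $u \in S$ whose trace on $Y$ is the alternating odd-indexed subset, and conclude that $N_G(u)$ would require at least $k+1$ intervals in the ordering $\tau$, contradicting $k$-clique-intervality. The only difference is that you spell out explicitly why the odd-indexed elements must lie in pairwise distinct intervals (the even-indexed separators), a step the paper leaves implicit.
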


\begin{proof}
Suppose by contradiction that $G$ is $k$-clique-interval but $(K,{\cal S})$ has VC-dimension at least $2k+1$.
Let $Y \subseteq K$ be such that $|Y| = 2k+1$ and $Y$ is shattered.
Any total ordering of $K$ induces a total ordering $y_1,y_2,\ldots,y_{2k+1}$ over $Y$.
Since $Y$ is shattered there is an $u \in S$ such that $N_G(u) \cap Y = \{ y_{2i+1} \mid 0 \leq i \leq k \}$.
But then, $N_G(u)$ is the union of at least $k+1$ intervals, a contradiction.
\end{proof}

It turns our that a weak converse of Proposition~\ref{prop:vc-dim} also holds.
We need a bit of terminology from~\cite{ChW89}.
A {\em spanning path} for $(X,R)$ is a total ordering $x_1,x_2,\ldots,x_{|X|}$ of $X$.
Its {\em stabbing number} is the maximum number of consecutive pairs $x_i,x_{i+1}$ that a set $r \in R$ can {\em stab}, {\it i.e.}, for which $|r \cap \{x_i,x_{i+1}\}| = 1$. 
Finally, the stabbing number of $(X,R)$ is the minimum stabbing number over its spanning paths.

\begin{obs}\label{obs:stabbing-num}
For any split graph $G=(K \cup S, E)$, let ${\cal S} = \{ N_G(u) \mid u \in S \}$.
The clique-interval number of $G$ is within one of the stabbing number of $(K,{\cal S})$.
\end{obs}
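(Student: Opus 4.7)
The strategy is to compare the two invariants ordering-by-ordering via a single per-set identity, and then transfer the comparison to the min-max optima that define them.

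First I would fix any total ordering $\tau = y_1, \ldots, y_n$ of $K$ and, for each $r \in {\cal S}$, count both the number $j_\tau(r)$ of maximal intervals of $r$ in $\tau$ and the number $\sigma_\tau(r)$ of consecutive pairs of $\tau$ stabbed by $r$. A left-to-right scan of $\tau$ shows that each maximal interval of $r$ has exactly two endpoints, each of which is either adjacent to a stabbed pair or coincides with the boundary element $y_1$ or $y_n$ of the ordering. Summing the resulting contributions yields the identity
\[
\sigma_\tau(r) \;=\; 2\, j_\tau(r) \;-\; [y_1 \in r] \;-\; [y_n \in r].
\]

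Then I would plug this identity into both min-max optima. An ordering $\tau^*$ realizing the clique-interval optimum satisfies $\sigma_{\tau^*}(r) \leq 2\, j_{\tau^*}(r)$ for every $r$, which bounds the stabbing number above in terms of the clique-interval number. Symmetrically, an ordering $\tau^{**}$ realizing the stabbing optimum satisfies $j_{\tau^{**}}(r) \leq (\sigma_{\tau^{**}}(r) + 2)/2$ for every $r$, which bounds the clique-interval number above in terms of the stabbing number. Together the two bounds force the two invariants to agree up to the small additive error arising from the boundary term.

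The main obstacle, and where the ``within one'' slack originates, is the boundary correction $[y_1 \in r] + [y_n \in r]$, which takes values in $\{0, 1, 2\}$ depending on how $r$ meets the endpoints of $\tau$. I would handle it by a padding argument: prepending and appending two fresh auxiliary vertices $y_0, y_{n+1}$ that lie outside every set in ${\cal S}$ eliminates the correction, turning the identity into $\sigma'_\tau(r) = 2\, j_\tau(r)$ exactly and making the exchange between the two optima clean up to the announced slack.
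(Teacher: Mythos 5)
Your scan identity is correct, and it is the right tool here: for any ordering $\tau = y_1,\ldots,y_{|K|}$ of $K$ and any $r \in {\cal S}$, each of the $j_\tau(r)$ maximal intervals of $r$ contributes one stabbed pair per endpoint unless that endpoint is $y_1$ or $y_{|K|}$, which gives exactly $\sigma_\tau(r) = 2 j_\tau(r) - [y_1 \in r] - [y_{|K|} \in r]$. The two inequalities you extract are also correct: writing $k$ for the clique-interval number and $s$ for the stabbing number, optimizing over orderings yields $s \leq 2k$ and $k \leq s/2 + 1$. The gap is in your concluding inference. These bounds pin $k$ to the window $[s/2,\ s/2+1]$; that is, they show the clique-interval number is within one of \emph{half} the stabbing number. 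Concluding that ``the two invariants agree up to a small additive error'' is a non sequitur, and the factor two is not an artifact of bookkeeping: each interior interval has two boundaries, hence contributes two stabbed pairs, so $\sigma$ genuinely lives at scale $2j$. Your padding trick removes only the boundary correction $[y_1 \in r] + [y_{|K|} \in r]$ (the source of the additive slack); after padding the identity reads $\sigma'_\tau(r) = 2 j_\tau(r)$, which makes the factor-two discrepancy exact rather than eliminating it.

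Moreover, no repair of that last step is possible, because the statement in its literal reading ($|k - s| \leq 1$) fails for large $k$. Take $K = \{1,\ldots,n\}$ ($n$ even) and let $S$ contain one vertex per $\frac{n}{2}$-subset of $K$; this is a split graph and $K$ is a maximal clique. For every ordering of $K$, the set of odd positions is a neighbourhood consisting of $n/2$ maximal intervals that stabs all $n-1$ consecutive pairs, and no neighbourhood can do worse; hence $k = n/2$ while $s = n - 1 = 2k - 1$, so the two quantities differ by $k-1$. What your identity actually proves is the correct two-sided estimate $2k - 2 \leq s \leq 2k$, {\it i.e.}, $k$ is within one of $\lceil s/2 \rceil$. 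Note that the paper states this observation without proof, and that this factor-two-accurate version is all that is used downstream: in Corollary~\ref{cor:vc-dim} the constant $2$ is absorbed into the exponential function $f$. So the right outcome of your analysis is to keep your first two steps verbatim, drop the final sentence and the padding remark, and restate the observation as ``within one of half the stabbing number.''
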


The main result of~\cite{ChW89} is that every range system of VC-dimension at most $k$ has a stabbing number in $\tilde{\cal O}(f(k) \cdot n^{1-\frac{1}{f(k)}})$, for some exponential function $f$.
We so obtain:

\begin{corollary}\label{cor:vc-dim}
For any split graph $G=(K \cup S, E)$, let ${\cal S} = \{ N_G(u) \mid u \in S \}$.
If $(K,{\cal S})$ has VC-dimension at most $k$ then $G$ is $\tilde{\cal O}(f(k) \cdot n^{1-\frac{1}{f(k)}})$-clique-interval, for some exponential function $f$.
\end{corollary}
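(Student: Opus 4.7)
The plan is a direct combination of Observation~\ref{obs:stabbing-num} with the Chazelle--Welzl theorem stated just above the corollary. By hypothesis, $(K, {\cal S})$ has VC-dimension at most $k$, so the Chazelle--Welzl result supplies a spanning path $\tau = x_1, x_2, \ldots, x_{|K|}$ of $K$ whose stabbing number $s$ is bounded by $\tilde{\cal O}(f(k) \cdot n^{1-1/f(k)})$ for some exponential function $f$. Viewing $\tau$ as a total ordering of the clique $K$, I would then invoke Observation~\ref{obs:stabbing-num} to transfer this upper bound to the clique-interval number of $G$.

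To make the translation from stabbing number to clique-interval number quantitative, I would argue as follows. For any $u \in S$, if $N_G(u)$ decomposes into $t$ maximal intervals of $\tau$, then the number of consecutive pairs $(x_i, x_{i+1})$ stabbed by $N_G(u)$ is $2t$, $2t-1$, or $2t-2$, according as $x_1$ and $x_{|K|}$ lie inside or outside $N_G(u)$. Hence $2t - 2 \leq s$, so every $N_G(u)$ decomposes into at most $\lfloor s/2 \rfloor + 1$ intervals of $\tau$, showing that $G$ is $(\lfloor s/2 \rfloor + 1)$-clique-interval. Substituting the Chazelle--Welzl bound on $s$ yields the claimed asymptotic bound on the clique-interval number of $G$.

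I do not expect any real obstacle: the statement is essentially a two-line corollary obtained by composing one known theorem with the translation dictionary supplied by Observation~\ref{obs:stabbing-num}. The only bookkeeping required is to absorb the small additive and multiplicative constants coming from this translation into the $\tilde{\cal O}(\cdot)$ notation, which is standard.
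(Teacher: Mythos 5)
Your proposal is correct and follows essentially the same route as the paper, which likewise obtains the corollary by composing the Chazelle--Welzl stabbing-number theorem with Observation~\ref{obs:stabbing-num}. The only difference is that you explicitly re-derive the quantitative link between the number of maximal intervals and the number of stabbed consecutive pairs (giving $t \leq \lfloor s/2 \rfloor + 1$), whereas the paper simply invokes the observation as stated; this is a harmless, indeed slightly sharper, elaboration.
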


\paragraph{Interval number.}
Finally, we relate the clique-interval number of split graphs with their interval number.
A graph $G=(V,E)$ is called $k$-interval if we can map every $v \in V$ to the union of at most $k$ closed interval on the real line, denoted by $I(v)$, in such a way that $uv \in E \Longleftrightarrow I(u) \cap I(v) \neq \emptyset$. In particular, $1$-interval graphs are exactly the interval graphs. 

\begin{figure}
\centering
\includegraphics[width=.25\textwidth]{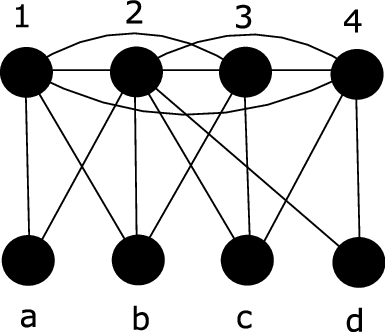}
\caption{An interval split graph that is not a clique-interval split graph.}
\label{fig:interval-not-clique-interval}
\end{figure}

We observe that there are $k$-interval split graphs that are {\em not} $k$-clique-interval, already for $k=1$.
For instance, consider the interval split graph of Fig.~\ref{fig:interval-not-clique-interval} (note that $\{a,1,2\},\{b,1,2,3\},\{1,2,3,4\},\{c,2,3,4\},\{d,2,4\}$ is a linear ordering of its maximal cliques).
Suppose by contradiction it is a clique-interval split graph, and let us consider a corresponding total ordering of $K=\{1,2,3,4\}$.
The intersection $N(b) \cap N(c) = \{2,3\}$ should be an interval.
But then, one of the pairs $\{1,2\}$ or $\{4,2\}$ is not consecutive, and so one of $N(a)$ or $N(d)$ is not an interval.
Therefore, the graph of Fig.~\ref{fig:interval-not-clique-interval} is not a clique-interval split graph.
Conversely, there are clique-interval split graphs which are not interval graphs. 
For instance, this is the case of thin spiders, {\it i.e.}, split graphs such that the edges between the maximal clique and the stable set induce a perfect matching.

Nevertheless we prove a weak connection between the clique-interval number and the interval number of a split graph, by using the VC-dimension.
Specifically, Bousquet et al. proved that the neighbourhood set system of any interval graph has VC-dimension at most two~\cite{BLLP+15}.
We generalize their result.
For that, we use in our proof the well-known Sauer-Shelah-Perles' Lemma~\cite{Sau72,She72}, namely: for any set system $(X,R)$ and any subset $Y \subseteq X$, if the VC-dimension of $(X,R)$ is at most $d$, then the cardinality of $\{ Y \cap r \mid r \in R \}$ is an ${\cal O}(|Y|^d)$.

\begin{proposition}\label{prop:k-interval}
The neighbourhood set system of any $k$-interval graph has VC-dimension at most $(4+o(1))k\log{k}$.
\end{proposition}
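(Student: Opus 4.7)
The plan is to fix a shattered subset $Y = \{y_1, \ldots, y_d\} \subseteq V$ and derive the bound $d \leq (4+o(1))k\log k$ by counting distinct neighbourhood traces via Sauer-Shelah-Perles. The intuition is that $N(v)\cap Y$ is entirely determined by how $I(v)$ interacts with the few ``cells'' induced on the real line by the intervals composing the $I(y_i)$'s, and that the family of such cell-patterns has small VC-dimension.

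First I would set up the cells. Let $P_0$ collect the at most $2kd$ endpoints of the intervals composing $I(y_1), \ldots, I(y_d)$; they split $\mathbb{R}$ into at most $m \leq 2kd+1$ elementary open cells $C_1 < C_2 < \cdots < C_m$, and within each cell the indicator of every $I(y_i)$ is constant. A standard general-position argument (perturbing all endpoints to be pairwise distinct, which does not affect the intersection graph) lets me assume $I(v)\cap I(y_i)\neq\emptyset$ if and only if some cell $C_j\subseteq I(y_i)$ is met by $I(v)$. Setting $R(v) := \{ C_j : I(v)\cap C_j \neq \emptyset \}$, the trace $N(v)\cap Y$ is then fully determined by $R(v)$.

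Second I would invoke Sauer-Shelah-Perles on the set system $(X,\mathcal{R})$ where $X = \{C_1,\ldots,C_m\}$ and $\mathcal{R} = \{R(v) : v \in V\}$. Each $R(v)$ is a union of at most $k$ runs of consecutive cells (one per interval of $I(v)$), and no such ``union of $k$ runs'' can realise the alternating in/out pattern on $2k+1$ cells, because that would force $k+1$ connected components; hence $(X,\mathcal{R})$ has VC-dimension at most $2k$. Sauer-Shelah-Perles then gives
\[
|\mathcal{R}| = {\cal O}(m^{2k}) = {\cal O}((kd)^{2k}).
\]
Since $Y$ is shattered, $2^d \leq |\{N(v)\cap Y : v\in V\}| \leq |\mathcal{R}|$, hence $d \leq 2k\log_2(kd) + {\cal O}(k)$. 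Substituting the ansatz $d = \alpha k \log_2 k$ and letting $k\to\infty$ forces $\alpha \leq 4+o(1)$, yielding the claimed bound.

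The one delicate point, which I expect to be the main obstacle, is the general-position step: making rigorous the claim that after an infinitesimal perturbation of the interval representation, $I(v)\cap I(y_i)\neq\emptyset$ if and only if $I(v)$ meets a cell entirely contained in $I(y_i)$. A safe alternative would be to enlarge $P$ to also contain the points of $P_0$ themselves, still keeping $|P| = {\cal O}(kd)$ and leaving the asymptotic bound untouched. Once this is settled, the rest is a clean Sauer-Shelah-Perles computation followed by an elementary asymptotic estimate.
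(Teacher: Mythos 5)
Your proof is correct, and it takes a genuinely different route from the paper's. The paper proceeds via an auxiliary interval graph $H$ whose vertices are the intervals of the representation: it invokes the result of Bousquet et al.~\cite{BLLP+15} that neighbourhood set systems of interval graphs have VC-dimension at most two, applies Sauer--Shelah--Perles to bound by ${\cal O}((kd)^2)$ the number of traces $N_H(\beta) \cap I(Y)$, and then counts the sets $\Phi_Y(u)$ of at most $k$ such traces, arriving at the same inequality $2^d \leq {\cal O}((kd)^{2k})$ that you derive. You instead work directly on the real line: the ${\cal O}(kd)$ endpoints of the intervals representing the shattered set $Y$ cut the line into ${\cal O}(kd)$ cells, each trace $R(v)$ is a union of at most $k$ runs of consecutive cells, such a family cannot shatter $2k+1$ cells (the alternating pattern would require $k+1$ runs), and a single application of Sauer--Shelah--Perles finishes the counting. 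What your route buys is self-containedness: you do not need the interval-graph result of~\cite{BLLP+15} as a black box, and your ``union of $k$ runs'' argument is precisely the alternation argument the paper already uses to prove Proposition~\ref{prop:vc-dim}, transplanted from clique vertices to cells -- a pleasant unification. What the paper's route buys is a reusable two-level template (unions of $k$ sets drawn from any family of small VC-dimension) that does not depend on the geometry of the line. Finally, the delicate point you flag is real, and your fallback is the right resolution: take as cells both the open intervals between consecutive endpoints and the singleton endpoints themselves; then $I(v) \cap I(y_i) \neq \emptyset$ holds exactly when $I(v)$ meets a cell contained in $I(y_i)$, no perturbation is needed (a perturbation could indeed destroy adjacencies realized only at coincident endpoints), and the cells remain linearly ordered and still number ${\cal O}(kd)$, so the run argument and the final asymptotic inversion -- which is the same in both proofs -- go through unchanged.
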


\begin{proof}
Let $G=(V,E)$ be a $k$-interval graph, and let us fix a corresponding $k$-interval representation.
We create a graph $H$ whose vertices are the intervals in this representation and such that there is an edge between every two intersecting intervals.
Observe that $H$ is an interval graph, and so, its neighbourhood system has VC-dimension at most two~\cite{BLLP+15}.
Now, let $Y \subseteq V$, $|Y| = d$, be shattered by the neighbourhood set system of $G$.
For the corresponding interval set $I(Y) = \bigcup_{v \in Y} I(v)$, the cardinality of $\{ N_H(\alpha) \cap I(Y) \mid \alpha \in V(H) \}$ is an ${\cal O}(|I(Y)|^2) = {\cal O}((kd)^2)$ by the Sauer-Shelah-Perles' Lemma~\cite{Sau72,She72}.
But then, for any $u \in V$, there are only ${\cal O}((kd)^{2k})$ possibilities for $\Phi_Y(u) = \{ N_H(\beta) \cap I(Y) \mid \beta \in I(u) \}$.
Since $Y$ is shattered and, for any $u,v \in V$, $N(u) \cap Y \neq N(v) \cap Y \Longrightarrow \Phi_Y(u) \neq \Phi_Y(v)$, we so obtain $2^d = \left| \{ Z \mid \exists u \in V, \ N(u) \cap Y = Z \} \right| \leq \left|\{ {\cal J} \mid \exists u \in V, \ \Phi_Y(u) = {\cal J} \}\right| = {\cal O}((kd)^{2k})$.
As a result, we must have that $d \leq \log{{\cal O}((kd)^{2k})} = 2k\log{(kd)} + {\cal O}(1) = 2k\log{d} + 2k\log{k} + {\cal O}(1)$. Assume $d \geq k$ (else, we are done). 
Then, $2k\log{d} + 2k\log{k} + {\cal O}(1) \leq 4k\log{d} + {\cal O}(1) = (4+o(1))k\log{d}$. In particular, $d/\log{d} \leq (4+o(1))k$, that implies $d \leq (4+o(1))k\log{k}$.
\end{proof}

Using previous Corollary~\ref{cor:vc-dim}, we have:

\begin{corollary}\label{cor:k-interval}
If $G$ is a $k$-interval split graph then, $G$ is also $\tilde{\cal O}(f(k\log{k}) \cdot n^{1-\frac{1}{f(k\log{k})}})$-clique-interval, for some exponential function $f$.
\end{corollary}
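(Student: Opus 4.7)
The plan is to chain Proposition~\ref{prop:k-interval} with Corollary~\ref{cor:vc-dim}; the entire argument is a short bookkeeping exercise once those two results are in hand. First I would take a $k$-interval split graph $G=(K\cup S,E)$ together with a fixed $k$-interval representation. By Proposition~\ref{prop:k-interval}, applied to $G$ viewed as a $k$-interval graph, the full neighbourhood set system $(V(G),\{N_G(v)\mid v\in V(G)\})$ has VC-dimension at most $(4+o(1))k\log{k}$.

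The one sanity check is that this bound transfers to the set system $(K,{\cal S})$ of Corollary~\ref{cor:vc-dim}, where ${\cal S}=\{N_G(u)\mid u\in S\}$. This is immediate from the definition: any $Y\subseteq K$ shattered by $(K,{\cal S})$ is also shattered as a subset of $V(G)$ by the full neighbourhood set system, since each shattering witness $N_G(u)$ with $u\in S$ is in particular a neighbourhood of a vertex of $G$. Hence the VC-dimension of $(K,{\cal S})$ is also at most $(4+o(1))k\log{k}$.

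I would then feed this bound directly into Corollary~\ref{cor:vc-dim}: it yields that $G$ is $\tilde{\cal O}(f((4+o(1))k\log{k})\cdot n^{1-1/f((4+o(1))k\log{k})})$-clique-interval, where $f$ is the exponential function supplied by that corollary. The leading multiplicative constant $(4+o(1))$ inside $f$ can be absorbed by replacing $f$ by $x\mapsto f(Cx)$ for a suitable constant $C$, which is still an exponential function; this gives precisely the stated bound $\tilde{\cal O}(f(k\log{k})\cdot n^{1-1/f(k\log{k})})$.

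I do not expect any real obstacle here: the two nontrivial ingredients, namely the VC-dimension bound for the neighbourhood set systems of $k$-interval graphs and the stabbing-number bound coming from~\cite{ChW89} via Corollary~\ref{cor:vc-dim}, have already been isolated in the preceding statements, and the corollary is simply their composition.
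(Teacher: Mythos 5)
Your proposal is correct and follows exactly the paper's route: the paper derives this corollary by composing Proposition~\ref{prop:k-interval} with Corollary~\ref{cor:vc-dim}, which is precisely your chain. In fact you supply a detail the paper leaves implicit -- the transfer of the VC-dimension bound from the full neighbourhood set system to the subsystem $(K,{\cal S})$ -- and your absorption of the constant $(4+o(1))$ into the exponential function $f$ is the right way to match the stated form.
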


\section{Diameter Computation in quasi linear time}\label{sec:range-queries}

We now address the time complexity of diameter computation on $k$-clique-interval split graphs.
Our first result in this section follows from the relations proved in Sec.~\ref{sec:rel} with the VC-dimension.

\begin{theorem}[~\cite{DHV20}]\label{thm:diam2-vc}
For every $d > 0$, there exists a constant $\varepsilon_d \in (0;1)$ such that in time $\tilde{\cal O}(mn^{1-\varepsilon_d})$ we can decide whether a graph whose neighbourhood set system has VC-dimension at most $d$ has diameter two.
\end{theorem}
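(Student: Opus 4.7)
This result is proved in the companion paper~\cite{DHV20}; I sketch the approach I would take. The key leverage is that bounded VC-dimension allows, via the theorem of Chazelle and Welzl~\cite{ChW89} already invoked for Corollary~\ref{cor:vc-dim}, to construct a spanning path of the neighbourhood set system $(V,\{N(v):v\in V\})$ whose stabbing number is $k=\tilde{\cal O}(n^{1-1/f(d)})$ for some exponential function $f$ depending only on $d$. Along such a total ordering $\tau$ of $V$, every neighbourhood $N(v)$ is a union of at most $O(k)$ intervals of $\tau$, and the same holds for every set of the form $V\setminus N[v]$.

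The plan is then to reduce diameter-$2$ testing to a purely one-dimensional interval-covering problem. First I would rewrite the condition $\mathrm{diam}(G)\leq 2$ as: for every vertex $u$, every non-neighbour of $u$ shares a common neighbour with $u$, equivalently
\[
V\setminus N[u]\ \subseteq\ \bigcup_{w\in N(u)}N(w).
\]
Using the interval representation, the right-hand side is a union of at most $\deg(u)\cdot k$ intervals along $\tau$, which by sorting and sweeping can be compacted into a disjoint interval cover in $\tilde{\cal O}(\deg(u)\cdot k)$ time. The left-hand side $V\setminus N[u]$ is itself $O(k)$ intervals, so the containment check reduces to a single linear scan of the two sorted lists, again in $\tilde{\cal O}(\deg(u)\cdot k)$ time.

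Summing over $u$, the total work is $\tilde{\cal O}\!\left(\sum_u \deg(u)\cdot k\right)=\tilde{\cal O}(mk)=\tilde{\cal O}(m\,n^{1-1/f(d)})$, which yields the claimed bound by setting $\varepsilon_d:=1/f(d)$.

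The main obstacle I foresee is the cost of building the low-stabbing spanning path itself: the Chazelle--Welzl construction is polynomial, but a naive implementation on $(V,\{N(v):v\in V\})$ could easily exceed the target budget $\tilde{\cal O}(m\,n^{1-\varepsilon_d})$ on sparse inputs. Overcoming this would likely require either a careful tailored implementation that exploits the specific structure of a neighbourhood set system, or a degree-threshold argument combining a random $\varepsilon$-net of size $\tilde{\cal O}(n^{1-\varepsilon_d})$ (whose existence is guaranteed by the bounded VC-dimension) used as BFS sources for high-degree vertices, with direct enumeration of $N(N(u))$ for low-degree $u$, so that the explicit construction of $\tau$ can be avoided or restricted to a smaller ground set.
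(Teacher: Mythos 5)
First, a framing point: the statement you were asked to prove is not proved anywhere in this paper — it is quoted, with attribution, from the companion paper \cite{DHV20}, and is used here only as a black box (to derive Corollary~\ref{cor:diam2-vc}). So the relevant comparison is with the proof in \cite{DHV20}, and measured against that proof your sketch follows essentially the same route. The rewriting of $diam(G)\leq 2$ as the per-vertex coverage condition $V\setminus N[u]\subseteq\bigcup_{w\in N(u)}N(w)$, the use of the Chazelle--Welzl theorem \cite{ChW89} to obtain a spanning path along which every neighbourhood (and hence every set $V\setminus N[u]$) is a union of $k=\tilde{\cal O}(n^{1-1/f(d)})$ intervals, and the sort-and-sweep containment check costing $\tilde{\cal O}(\deg(u)\cdot k)$ per vertex and $\tilde{\cal O}(mk)=\tilde{\cal O}(mn^{1-\varepsilon_d})$ in total: this is the skeleton of the companion paper's argument. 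You have also correctly isolated the one genuine obstruction, namely that the Chazelle--Welzl bound is existential and that constructing a low-stabbing spanning path naively is too expensive; supplying a randomized, truly subquadratic construction of such a path is precisely the technical core of \cite{DHV20}. (The reason a fast implementation of the iterative-reweighting scheme is possible at all is that, for a neighbourhood set system, the sets crossing a pair $\{x,y\}$ are exactly those indexed by vertices of $N[x]\,\triangle\,N[y]$, so crossing counts and implicit weights can be updated in ${\cal O}(\deg(x)+\deg(y))$ time, and the candidate pair in each round can be found by bucketing vertices into the atoms of a small weighted sample of sets.)

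One caution about your proposed fallbacks: the first (a tailored implementation exploiting the structure of neighbourhood set systems) is the right direction, but the second would not work as stated. Running BFS from an $\varepsilon$-net that hits all large neighbourhoods does not certify, for a specific pair $u,v$, whether $N[u]\cap N[v]\neq\emptyset$: the distances from net vertices to $u$ and to $v$ leave $dist_G(u,v)\leq 2$ undetermined, and indeed small hitting sets for large neighbourhoods exist in \emph{all} graphs while the diameter-two problem remains SETH-hard in general, so any argument whose only use of bounded VC-dimension is the size of a net cannot succeed. Likewise, for a low-degree vertex $u$, enumerating $N(N(u))$ costs $\sum_{w\in N(u)}\deg(w)$, which a degree threshold on $u$ alone does not bound. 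None of this undermines your main line of argument, which is the correct one; it only means the gap you flagged must be closed the way the companion paper closes it, not by the net-plus-BFS shortcut.
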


\begin{corollary}\label{cor:diam2-vc}
For every constant $k$, there exists a constant $\eta_k \in (0;1)$ such that in time $\tilde{\cal O}(mn^{1-\eta_k})$ we can decide whether a $k$-clique-interval split graph has diameter two.
\end{corollary}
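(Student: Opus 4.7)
The plan is to deduce the corollary from Theorem~\ref{thm:diam2-vc} by bounding the VC-dimension of the \emph{full} neighbourhood set system $(V,\{N_G(v)\mid v\in V\})$ of a $k$-clique-interval split graph $G=(K\cup S,E)$ by a function of $k$ only. Proposition~\ref{prop:vc-dim} already provides such a bound -- namely $2k$ -- but only for the restricted system $(K,\{N_G(u)\mid u\in S\})$, so the task is really to lift this bound to the entire vertex set.

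To this end, I would consider an arbitrary shattered set $Y=Y_K\cup Y_S\subseteq V$, where $Y_K=Y\cap K$ and $Y_S=Y\cap S$. The key rigidity of split graphs is that every $v\in K$ satisfies $N_G(v)\cap Y_K\in\{Y_K,Y_K\setminus\{v\}\}$, while every $v\in S$ satisfies $N_G(v)\cap Y_S=\emptyset$. A short case analysis then shows that $Y$ can only be shattered if either $|Y_K|\leq 1$ or $|Y_S|=0$: for instance, as soon as $|Y_K|\geq 2$ and $|Y_S|\geq 1$, the trace $\{s\}$, for any $s\in Y_S$, cannot be realized, since no vertex of $S$ meets $Y_S$ nontrivially and every vertex of $K$ meets $Y_K$ in at least $|Y_K|-1\geq 1$ elements.

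In the remaining sub-cases I would invoke the Sauer-Shelah-Perles lemma separately. When $Y\subseteq K$, vertices of $K$ contribute at most $|Y|+1$ traces on $Y$ and vertices of $S$ contribute ${\cal O}(|Y|^{2k})$ further traces by Proposition~\ref{prop:vc-dim}; forcing $2^{|Y|}\leq{\cal O}(|Y|^{2k})+|Y|+1$ yields $|Y|={\cal O}(k\log k)$. When $Y\subseteq S$, only vertices of $K$ can realize nonempty traces on $Y$, so $Y$ must be shattered by the dual system $(S,\{N_G(v)\cap S\mid v\in K\})$; applying the standard bound that the dual VC-dimension is at most $2^{d+1}$ when the primal has VC-dimension $d$, one gets $|Y|\leq 2^{2k+1}$. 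Finally, the mixed sub-case $|Y_K|=1$ with $|Y_S|\geq 1$ is handled by direct inspection: the only traces not containing the unique element of $Y_K$ are $\emptyset$ (from any $v\in S$ non-adjacent to $Y_K$) and $N_G(y)\cap Y_S$ (from $v=y$), forcing $2^{|Y_S|}\leq 2$ and hence $|Y_S|\leq 1$.

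Combining the three bounds above yields a constant $d_k$, depending only on $k$, such that the full neighbourhood set system of $G$ has VC-dimension at most $d_k$. Theorem~\ref{thm:diam2-vc} then provides an $\tilde{\cal O}(mn^{1-\varepsilon_{d_k}})$-time algorithm for deciding whether $G$ has diameter two; setting $\eta_k:=\varepsilon_{d_k}$ finishes the proof, since every split graph has diameter at most three and the trivial diameter-at-most-one case can be checked upfront in linear time. The only real obstacle I anticipate is keeping the three sub-cases straight and correctly invoking the dual VC-dimension bound; once the VC-dimension of the full system is under control, the result is a direct invocation of the general framework from~\cite{DHV20}.
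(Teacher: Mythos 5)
Your proof is correct and rests on the same two pillars as the paper's — Proposition~\ref{prop:vc-dim} together with primal/dual VC-duality, followed by an invocation of Theorem~\ref{thm:diam2-vc} — but the way you glue the pieces together is genuinely different. The paper writes the neighbourhood set system of $G$ as the union of $(K,\{N_G(v)\mid v\in S\})$ and $(S\cup K,\{N_G(v)\mid v\in K\})$ and then black-boxes the fact (due to Dudley~\cite{Dud78}) that the VC-dimension of a union of two set systems is bounded by a function of the two individual VC-dimensions; the second system is then reduced to the dual system $(S,\{N_G(v)\cap S\mid v\in K\})$ via the same ``a shattered set meets $K$ in at most one vertex'' rigidity argument that you use. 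You avoid Dudley's union bound entirely: your direct case analysis of a shattered set $Y=Y_K\cup Y_S$ of the full system (the case $|Y_K|\geq 2$, $Y_S\neq\emptyset$ is impossible; $Y\subseteq K$ is handled by a Sauer--Shelah--Perles counting argument; $Y\subseteq S$ by duality; $|Y_K|=1$ by inspection) is more elementary and self-contained, at the cost of a slightly worse bound ${\cal O}(k\log k)$ instead of $2k$ in the clique case — immaterial for the statement. One small patch is needed: in the case $Y\subseteq S$, shattering by the \emph{full} system only guarantees that every nonempty subset of $Y$ is the trace of a $K$-vertex, since the empty trace may be realized only by a vertex of $S$ (for instance a vertex of $Y$ itself); so $Y$ need not be literally shattered by the dual system. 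However $Y\setminus\{y\}$ is shattered by it for any $y\in Y$, which costs $+1$ in your bound and changes nothing in the conclusion.
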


\begin{proof}
Let $G=(K \cup S,E)$ be a $k$-clique-interval split graph.
By Theorem~\ref{thm:diam2-vc}, it suffices to prove that the neighbourhood set system of $G$ has VC-dimension bounded by a function of $k$.
The latter system is the union of $(K,\{N_G(v) \mid v \in S \})$ with $(S\cup K,\{N_G(v) \mid v \in K\})$.
Since the VC-dimension of the union of two set systems can be upper bounded by a function of their respective VC-dimensions~\cite{Dud78}, we are left proving that both set systems have a VC-dimension which is upper bounded by a function of $k$.  
By Proposition~\ref{prop:vc-dim}, $(K,\{N_G(v) \mid v \in S \})$ has VC-dimension at most $2k$.
Furthermore, we claim that the VC-dimension of $(S\cup K,\{N_G(v) \mid v \in K\})$ is at least the VC-dimension of $(S,\{N_G(v) \cap S \mid v \in K\})$, and at most this value plus one. The lower bound is trivial because any subset shattered by $(S,\{N_G(v) \cap S \mid v \in K\})$ is also shattered by $(S\cup K,\{N_G(v) \mid v \in K\})$. Conversely, let $Y \subseteq K \cup S$ be shattered by $(S\cup K,\{N_G(v) \mid v \in K\})$. If $Y \subseteq S$ then, it is also shattered by $(S,\{N_G(v) \cap S \mid v \in K\})$, and so we are done. Therefore, let $K_Y = K \cap Y$ be nonempty. Since we have $K \subseteq N_G[v]$ for every $v \in K$, it follows that every vertex $v \in K$ is adjacent to at least $|K_Y|-1$ vertices of $K_Y$ (and even to all vertices of $K_Y$ if and only if $v \notin K_Y$). But then, since we assume $K_Y \subseteq Y$ to be shattered, we get $|K_Y| \leq 1$. Therefore, the claim is proved, and so we are left upper bounding the VC-dimension of $(S,\{N_G(v) \cap S \mid v \in K\})$.
We observe that $(K,\{N_G(v) \mid v \in S \})$ and $(S,\{N_G(v) \cap S \mid v \in K\})$ are dual from each other (where the dual of a set system $(X,R)$ is another set system $(R,X^*)$ so that for every $x \in X$, there is an $r_x \in X^*$ such that $r_x = \{r \in R \mid x \in r\}$).
Therefore, if the VC-dimension of $(K,\{N_G(v) \mid v \in S \})$ is at most some value $d$, then the VC-dimension of $(S,\{N_G(v) \cap S \mid v \in K\})$ is at most $2^d$~\cite[Lemma 2.3]{ChW89}.
We conclude from Proposition~\ref{prop:vc-dim} that the VC-dimension of $(S,\{N_G(v) \cap S \mid v \in K\})$ is at most $4^k$, and we are done using Theorem \ref{thm:diam2-vc}.
\end{proof}

We observe that Corollary~\ref{cor:diam2-vc} also holds for the {\em complements} of $k$-clique-interval split graphs -- with essentially the same proof as above.
In the remainder of this section we focus on the existence of {\em quasi linear-time} algorithms.
It starts with a small digression on clique-interval split graphs. 

\subsection{The Case $k=1$ and beyond}\label{sec:univ-vertex}

\begin{proposition}\label{prop:clique-interval}
A clique-interval split graph has diameter at most two if and only if it has a universal vertex.
Therefore, we can compute the diameter of clique-interval split graphs in linear time.
\end{proposition}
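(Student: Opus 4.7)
The plan is to establish the equivalence using the one-dimensional Helly property for intervals, and then to derive the linear-time algorithm from a simple universal-vertex test.

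The easy direction is immediate: if $v$ is a universal vertex of $G$, then any two vertices are at distance at most two via a path through $v$, so $diam(G) \leq 2$.

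For the converse, I will fix a clique-interval ordering $\tau$ of the clique $K$ of $G = (K \cup S, E)$ and assume $diam(G) \leq 2$. Since $G$ is connected by the standing assumption of the paper, every $u \in S$ satisfies $N(u) \neq \emptyset$; and since any two distinct $u, u' \in S$ are non-adjacent, the diameter-two hypothesis forces them to share a common neighbour in $K$, so $N(u) \cap N(u') \neq \emptyset$. Thus $\{ N(u) : u \in S \}$ is a family of pairwise intersecting intervals of $\tau$, and the Helly property for intervals on a line (if $N(u) = [a_u,b_u]$ in $\tau$, then pairwise intersection forces $\max_u a_u \leq \min_u b_u$) yields some $v^* \in \bigcap_{u \in S} N(u) \subseteq K$. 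This $v^*$ is adjacent to all of $S$ and, being in the clique $K$, also to all of $K \setminus \{v^*\}$; hence $v^*$ is universal.

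For the complexity statement, starting from the sparse representation $(K, \{N(u) : u \in S\})$, I would test whether any vertex of $K$ lies in every $N(u)$ by counting, for each $v \in K$, the number of $u \in S$ with $v \in N(u)$ and comparing this tally with $|S|$; this runs in time linear in the representation. By the equivalence above, we return $diam(G) \in \{1,2\}$ if such a $v$ exists (depending on whether $G$ is complete) and $diam(G) = 3$ otherwise, using the well-known fact that every connected split graph has diameter at most three. I do not foresee any real obstacle: the mathematical core reduces to a single application of the one-dimensional Helly property, and the algorithmic side is an elementary counting procedure over the sparse representation.
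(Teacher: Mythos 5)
Your proof is correct and follows essentially the same route as the paper's: both reduce the diameter-two question to whether all stable-set neighbourhoods (which are intervals in the clique ordering) pairwise intersect, apply the one-dimensional Helly property to extract a common neighbour in $K$, and observe that such a vertex is universal, with the linear-time algorithm being the same universal-vertex test on the sparse representation. No gaps; the only additions you make (handling connectivity, completeness, and the explicit counting procedure) are routine details the paper leaves implicit.
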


\begin{proof}
For $G=(K \cup S,E)$ a clique-interval split graph, let us fix a corresponding total order over the maximal clique $K$.
Since $K$ is a dominating clique of $G$, all vertices in $K$ are at a distance at most two from any other vertex in $G$.
Therefore, we are left to decide whether every two vertices in the stable set $S$ are at distance two, {\it i.e.}, whether they have a common neighbour.
Equivalently, given the total order over $K$, we must decide whether for every $v,v' \in S$, the intervals $N_G(v)$ and $N_G(v')$ intersect.
By the Helly property, a finite family of intervals on the line pairwise intersect if and only if they have a non-empty intersection.
As a result, we are left deciding whether $\bigcap_{v \in S} N_G(v) \neq \emptyset$, or equivalently whether some vertex of $K$ is adjacent to all of $S$.
Since $K$ is a clique, having a vertex adjacent to all the vertices in $S$ is equivalent to having a universal vertex. 
\end{proof}

We think that our proof of Prop.~\ref{prop:clique-interval} is a nice introduction to the properties of clique-interval representations for split graphs, that will be further exploited in Sec.~\ref{sec:gal-case}. 
In the remainder of this part we give an alternative proof of Prop.~\ref{prop:clique-interval} that also applies to larger subclasses of split graphs.
It starts with the following inclusion lemma.

\begin{lemma}\label{lem:strongly-chordal}
Every clique-interval split graph is strongly chordal.
\end{lemma}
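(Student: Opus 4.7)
The plan is to use the classical forbidden-subgraph characterization of strongly chordal graphs: a graph is strongly chordal if and only if it is chordal and contains no induced $k$-sun for any $k \geq 3$, where a $k$-sun consists of a clique $\{c_1,\ldots,c_k\}$ and an independent set $\{s_1,\ldots,s_k\}$ such that each $s_i$ is adjacent to precisely $c_i$ and $c_{i+1}$ (indices taken modulo $k$). Since every split graph is chordal, it suffices to rule out induced $k$-suns inside clique-interval split graphs.

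I would proceed by contradiction. Let $G=(K\cup S,E)$ be a clique-interval split graph containing an induced $k$-sun with parts $\{c_1,\ldots,c_k\}$ and $\{s_1,\ldots,s_k\}$ for some $k\geq 3$. A short case analysis shows that all $c_i$ lie in $K$ and all $s_i$ lie in $S$: placing some $s_i$ in $K$ would force a missing edge inside the clique $K$ (because $s_i$ is nonadjacent to $k-2\geq 1$ of the $c_j$), and symmetrically placing some $c_i$ in $S$ would force an edge inside the stable set $S$. Fix a total ordering $\tau$ of $K$ witnessing that $G$ is clique-interval, and let $\sigma$ denote the restriction of $\tau$ to $\{c_1,\ldots,c_k\}$.

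For each $i$, the set $N_G(s_i)$ is a contiguous range in $\tau$ that contains $c_i$ and $c_{i+1 \bmod k}$ but none of the other $c_j$. Contiguity therefore forces $c_i$ and $c_{i+1 \bmod k}$ to be \emph{consecutive} in $\sigma$, since otherwise a third $c_j$ would lie strictly between them in $\tau$ and hence inside the interval $N_G(s_i)$. This yields $k$ distinct unordered pairs (distinctness is where $k\geq 3$ is used), each of which must occur among the $k-1$ consecutive pairs of the linear order $\sigma$, which is impossible.

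The main conceptual step is identifying $k$-suns as the obstruction and reading the ``interval'' requirement as ``no third sun-clique vertex lies in between''; once these two observations are in place, the argument reduces to the pigeonhole count above. The only mildly delicate point is the preliminary check that the split bipartition cannot shuffle the sun's clique and stable vertices, but this is immediate for $k \geq 3$.
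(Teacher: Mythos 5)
Your proof is correct and follows essentially the same route as the paper's: invoke Farber's characterization (chordal plus no induced $k$-sun, $k \geq 3$) and observe that a clique-interval ordering cannot accommodate the cyclic adjacency pattern of a sun. The paper states this last step in one terse sentence (``the neighbourhoods of the vertices in the stable set induce a cyclic ordering over $K$''); your write-up merely fills in the details it glosses over, namely that the sun's parts must align with the split partition and that the $k$ forced consecutive pairs exceed the $k-1$ available in a linear order.
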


\begin{proof}
The $n$-sun graph is a split graph $G_n = (K \cup S,E)$ such that $K = \{u_0,u_1,\ldots,u_{n-1}\}, \ S = \{v_0,v_1,\ldots,v_{n-1} \}$ and for every $i$ we have $N_G(v_i) = \{u_i,u_{i+1}\}$ (indices are taken modulo $n$).
A chordal graph is strongly chordal if and only if it does not contain any $n$-sun graph as an induced subgraph, for every $n \geq 3$~\cite{Far83}.
Since every split graph is a chordal graph, in order to prove the lemma, it suffices to prove that a clique-interval split graph does not contain any $n$-sun graph as an induced subgraph, for every $n \geq 3$.
This is always the case for clique-interval split graphs because, in any $n$-sun graph, the neighbourhoods of the vertices in the stable set induce a {\em cyclic} ordering over $K$.
\end{proof}

We stress that since every $n$-sun graph is $2$-clique-interval, this above Lemma~\ref{lem:strongly-chordal} does not hold for $k$-clique-interval graphs, for any $k \geq 2$.
Now, a {\em maximum neighbour} of $v$ is any $u \in V$ such that $\bigcup_{w \in N_G[v]} N_G[w] \subseteq N_G[u]$ (see~\cite{BCD98}).
We prove that if at least one vertex in a graph has a maximum neighbour then deciding whether the diameter is at most two becomes a trivial task.
In particular, this is always the case for strongly chordal graphs~\cite{Far83}.

\begin{lemma}\label{lem:max-neighbour}
For every $G=(V,E)$ and $u,v \in V$ such that $u$ is a maximum neighbour of $v$, we have $diam(G) \leq 2$ if and only if $u$ is a universal vertex.
\end{lemma}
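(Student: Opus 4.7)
The plan is to unpack the definition of maximum neighbour and observe that $\bigcup_{w \in N_G[v]} N_G[w]$ is exactly the closed $2$-neighbourhood of $v$, namely $N_G^{\le 2}[v] = \{x \in V \mid d_G(v,x) \le 2\}$. Thus the hypothesis ``$u$ is a maximum neighbour of $v$'' says precisely that $N_G^{\le 2}[v] \subseteq N_G[u]$. Both directions of the equivalence then follow by a one-line argument.

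For the forward direction, assume $\mathrm{diam}(G) \le 2$. Then every vertex $x \in V$ satisfies $d_G(v,x) \le 2$, so $x \in N_G^{\le 2}[v] \subseteq N_G[u]$ by the maximum neighbour property. Hence $N_G[u] = V$, i.e.\ $u$ is a universal vertex.

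For the backward direction, if $u$ is universal then for any two distinct vertices $x,y \in V \setminus \{u\}$ we have $x,y \in N_G(u)$, so $x{-}u{-}y$ is a path of length $2$, while if one of them equals $u$ they are already adjacent. Thus $\mathrm{diam}(G) \le 2$.

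There is no real obstacle here: the statement is essentially a reformulation of the definition of maximum neighbour once one recognises $\bigcup_{w \in N_G[v]} N_G[w]$ as the distance-$2$ closed ball around $v$. The only thing to be careful about is to phrase the forward direction using the (trivial) inclusion of $N_G^{\le 2}[v]$ into $N_G[u]$, and to note in the backward direction that universality alone suffices without any appeal to the maximum neighbour hypothesis.
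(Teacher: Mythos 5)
Your proof is correct and follows essentially the same route as the paper's: both directions hinge on recognising that the maximum neighbour hypothesis $\bigcup_{w \in N_G[v]} N_G[w] \subseteq N_G[u]$ means the closed distance-$2$ ball around $v$ lies inside $N_G[u]$, so $diam(G) \leq 2$ forces $N_G[u] = V$, while the converse is the trivial fact that a universal vertex yields diameter at most two. Your explicit identification of the union as $N_G^{\le 2}[v]$ is just a slightly more packaged phrasing of the paper's pointwise argument (pick $w \in N_G[v]$ on a shortest $v$--$x$ path), so there is nothing to add.
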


\begin{proof}
If $u$ is universal in $G$ then, trivially, $diam(G) \leq 2$.
Conversely, assume $diam(G) \leq 2$.
Then, for every $x \in V$, $dist_G(x,v) \leq 2$, and so there exists a $w \in N_G[v]$ such that either $x=w$ or $xw \in E$.
In particular, $x \in N_G[w] \subseteq N_G[u]$.
As a result, we have $N_G[u] = V$ or, equivalently, $u$ is universal.
\end{proof}

Prop.~\ref{prop:clique-interval} now follows from the combination of Lemmata~\ref{lem:strongly-chordal} and~\ref{lem:max-neighbour}.
Furthermore it turns out that many well-structured graph classes ensure the existence of a vertex with a maximum neighbour such as: graphs with a pendant vertex, threshold graphs~\cite{HeK07} and interval graphs~\cite{BHMW10}, or even more generally dually chordal graphs~\cite{BCD98}.
We conclude that:

\begin{corollary}\label{cor:universal-vertex}
We can compute the diameter of split graphs with minimum degree one and dually chordal split graphs in linear time.

In particular, we can compute the diameter of interval split graphs and strongly chordal split graphs in linear time. 
\end{corollary}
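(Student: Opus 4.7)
The plan is to combine Lemma~\ref{lem:max-neighbour} with known structural properties of the listed graph classes so that deciding $\mathrm{diam}(G) \leq 2$ reduces to a trivial test for a universal vertex, which can be performed in linear time by scanning degrees. Recall that every split graph has diameter at most three, so once we distinguish diameter one from two from three we are done; diameter one is a completeness check (also linear-time), and the remaining dichotomy between two and three is the only nontrivial task.

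The key observation is the following. Suppose $G$ has at least one vertex $v$ admitting a maximum neighbour $u$. By Lemma~\ref{lem:max-neighbour}, $\mathrm{diam}(G) \leq 2$ is equivalent to $u$ being universal. Hence $\mathrm{diam}(G) \leq 2$ is in turn equivalent to the existence of some universal vertex in $G$: if one such vertex $w$ exists then $\mathrm{diam}(G) \leq 2$, which forces $u$ to be universal by the lemma. Therefore, for any graph class in which \emph{every} graph contains a vertex with a maximum neighbour, computing the diameter amounts to: (i) test for completeness, (ii) if not complete, scan the degree sequence for a vertex of degree $n-1$, and output $2$ or $3$ accordingly. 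Both tests run in ${\cal O}(n+m)$ time on the sparse representation.

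It remains to exhibit, for each class in the statement, a vertex with a maximum neighbour. For split graphs with minimum degree one, take any pendant $v$ and let $u$ be its unique neighbour; since $N_G[v] = \{v,u\}$ and $v \in N_G[u]$, we get $\bigcup_{w \in N_G[v]} N_G[w] = N_G[u]$, so $u$ is a maximum neighbour of $v$. For dually chordal graphs, the very definition via a maximum neighbourhood ordering~\cite{BCD98} guarantees that the first vertex of such an ordering has a maximum neighbour; in particular this covers strongly chordal graphs and interval graphs, both of which are dually chordal~\cite{BCD98}. Thus the second assertion of the corollary follows from the first.

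The only possible obstacle is the completeness check at the very beginning, but for a split graph $G = (K \cup S, E)$ given by its sparse representation, this is just checking whether $S = \emptyset$, or equivalently whether $|K| = n$, which is immediate. So the overall running time is linear on the sparse representation, matching the claimed bound.
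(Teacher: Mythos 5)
Your proposal is correct and follows essentially the same route as the paper: it combines Lemma~\ref{lem:max-neighbour} with the existence of a vertex having a maximum neighbour in each listed class (a pendant vertex's unique neighbour for minimum degree one, and the maximum neighbourhood ordering defining dually chordal graphs, which subsume interval and strongly chordal split graphs). The only addition is that you spell out explicitly the resulting linear-time algorithm (completeness check, then a degree scan for a universal vertex), which the paper leaves implicit.
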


We observe that we can easily modify the hardness reduction from~\cite{BCH16} in order to show that, under SETH, we cannot compute in subquadratic time the diameter of split graphs with minimum degree two.
Indeed, this aforementioned reduction outputs a split graph $G=(K \cup S,E)$ such that: $S$ is partitioned in two disjoint subsets $A$ and $B$; and a diametral pair must have one end in each subset.
Let $a,b,v_0 \notin V$ be fresh new vertices, and let $G' = (K' \cup S',E')$ be such that: $K' = K \cup \{a,b\}, \ S' = S \cup \{v_0\}$ and:
$$E' = E \cup \{ au \mid u \in A \} \cup \{ bw \mid w \in B\} \cup \{av,bv \mid v \in K \cup \{v_0\}\} \cup \{ab\}.$$
By construction, $diam(G') \leq 2$ if and only if $diam(G) \leq 2$.
Therefore, Corollary~\ref{cor:universal-vertex} is optimal for the parameter minimum degree.

\subsection{The general case}\label{sec:gal-case}

We are now ready to prove the first main result of this paper.

\begin{theorem}\label{thm:main}
If $G=(K \cup S,E)$ is an $n$-vertex split graph and we are given a total ordering over $K$ showing that $G$ is $k$-clique-interval then, we can compute the diameter of  $G$ in time ${\cal O}(m + k^22^{{\cal O}(k)}n^{1+\epsilon})$, for any $\epsilon > 0$.
This is quasi linear-time if $k=o(\log{n})$.

Conversely, under SETH we cannot compute the diameter of $n$-vertex split graphs with clique-interval number $\omega(\log{n})$ in subquadratic time.
\end{theorem}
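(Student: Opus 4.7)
The plan is to handle the upper and lower bounds separately. For the upper bound I would use a range-tree approach; for the lower bound, a direct transfer of the SETH hardness of~\cite{BCH16} via Lemma~\ref{lem:bounded-tw}.

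For the upper bound, first observe that in a connected split graph $G = (K \cup S, E)$ the clique $K$ is dominating, so any pair of vertices at distance more than two must lie in $S$; and since $S$ is independent, two vertices $u, v \in S$ are at distance at most $2$ if and only if they share a neighbour in $K$. Using the given ordering $\tau$ of $K$, each $N_G(v)$ for $v \in S$ decomposes as a union of at most $k$ disjoint intervals of $\tau$; denote this sorted family by $\mathcal{I}_v$. Then $diam(G) \le 2$ if and only if, for every pair $u, v \in S$, some interval of $\mathcal{I}_u$ intersects some interval of $\mathcal{I}_v$. Equivalently, $diam(G) = 3$ iff there is a pair $(u,v)$ with $\mathcal{I}_u$ and $\mathcal{I}_v$ pairwise disjoint.

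The key combinatorial observation is that, whenever $\mathcal{I}_u$ and $\mathcal{I}_v$ are pairwise disjoint, merging them by left endpoint produces an ordered sequence of at most $2k$ disjoint intervals on $\tau$, together with a labeling $\pi \in \{u, v\}^{\le 2k}$ recording which family each interval belongs to. There are only $2^{O(k)}$ such labelings, and for each fixed $\pi$ the condition that a pair $(u, v)$ realises $\pi$ reduces to a conjunction of $O(k)$ inequalities, each relating an endpoint coordinate of $u$ to an endpoint coordinate of $v$ (namely, ``right endpoint at position $i$'' is strictly less than ``left endpoint at position $i+1$'' whenever consecutive positions carry different labels). I would enumerate over all labelings $\pi$, view each $u \in S$ as a point in $\mathbb{R}^{O(k)}$ given by its interval endpoints, and use Bentley's $k$-range trees~\cite{Ben79} to answer, for each candidate $u$, the orthogonal range query ``does there exist $v \in S$ satisfying the $\pi$-constraints?''. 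Summing over labelings and queries yields total running time $O(m + k^2 \cdot 2^{O(k)} \cdot n^{1+o(1)})$, which is quasi-linear whenever $k = o(\log n)$.

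The main technical obstacle will be making this enumeration modular enough to cleanly separate the exponential $k$-dependence from the $n^{1+o(1)}$ factor coming from the range-tree machinery; subtleties include grouping vertices of $S$ by their exact number of intervals (so that only pairs with compatible counts are searched against a given labeling), handling the boundary positions of $\tau$, and ensuring that the range-tree queries are truly orthogonal after the appropriate affine rewriting of the $\pi$-constraints.

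For the conditional lower bound, the argument is immediate once Lemma~\ref{lem:bounded-tw} is in hand. By~\cite{BCH16}, under SETH there is no truly subquadratic algorithm for diameter on the class of split graphs of treewidth $\omega(\log n)$. But Lemma~\ref{lem:bounded-tw} asserts that every split graph of treewidth $\le t$ has clique-interval number at most $\lceil t/2 \rceil$; in particular, the class of split graphs with clique-interval number at most $f(n)$ \emph{contains} the class of split graphs of treewidth at most $2 f(n)$. Therefore, a truly subquadratic diameter algorithm on split graphs of clique-interval number $\omega(\log n)$ would in particular handle the $\omega(\log n)$-treewidth instances of~\cite{BCH16}, contradicting SETH.
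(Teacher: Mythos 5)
Your proposal is correct, and while it relies on the same machinery as the paper --- each $v \in S$ becomes the $2k$-dimensional point of its sorted interval endpoints, and Proposition~\ref{prop:complexity-range-trees} answers $2^{{\cal O}(k)}$ orthogonal range queries per vertex --- the combinatorial reduction is genuinely different, essentially dual to the paper's. The paper decides whether $diam(G) \leq 2$ by \emph{counting}, for each $v \in S$, the vertices of $S$ at distance two from $v$: it partitions them into classes $S_{i,j}$ indexed by the minimal pair of indices of intersecting intervals, and then must further split each class into \emph{pairwise disjoint} query events (at most $3 \cdot 2^{2k-2}$ per pair $(i,j)$), since exact counting tolerates no overcounting; this disjointification is the delicate core of the paper's argument. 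You instead search directly for a diameter-three witness, {\it i.e.}, a pair $u,v \in S$ with pairwise disjoint interval families, characterized by the interleaving pattern of the merged families. Correctness rests on the (valid) observation that, because consecutive intervals within one family are already separated, pairwise disjointness of the two families is equivalent to the ${\cal O}(k)$ consecutive different-label constraints of some pattern, each of which places at most one one-sided bound on one coordinate of $v$'s point, hence an orthogonal box query. Since you only need existence (non-emptiness) queries rather than exact counts, overlaps between events of different patterns would be harmless and no disjointification is required --- a real simplification that reaches the same ${\cal O}(m + k^2 2^{{\cal O}(k)} n^{1+o(1)})$ bound. The two details you flag are indeed all that is left to pin down: bucket the vertices of $S$ by their exact number of intervals (one range tree per bucket), and note that no self-pair exclusion is needed, since the chain of constraints of any pattern applied to $v = u$ would force $r_1(u) < l_1(u)$, which is absurd. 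Your lower-bound argument coincides with the paper's: Lemma~\ref{lem:bounded-tw} transfers the SETH-hardness of diameter on split graphs of treewidth $\omega(\log{n})$~\cite{AVW16,BCH16}, instantiated with clique size about twice the target clique-interval bound, modulo the harmless off-by-one between treewidth and clique number.
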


In order to prove Theorem~\ref{thm:main}, we will use a special instance of {\em $k$-range tree}.
Such a data-structure stores a static set of $k$-dimensional points and it supports the following operation:
\begin{itemize}
\item ({\it Range Query}) Given $k$ intervals $[l_i;u_i], \ 1 \leq i \leq k$, compute the number of stored points $p$ such that, for every $1 \leq i \leq k$, we have $l_i \leq p_i \leq u_i$\footnote{
We refer to~\cite{BHM18} for a more general presentation of range trees.}. 
\end{itemize}

\begin{proposition}[\cite{BHM18}]\label{prop:complexity-range-trees}
%
For any $k \geq 2$ and any $n$-set of $k$-dimensional points, we can construct a $k$-range tree in time ${\cal O}(k\binom{k + 1 + \left\lceil \log{n} \right\rceil}{k+1}n) = {\cal O}(k2^{{\cal O}(k)}n^{1+\epsilon})$ for any $\epsilon > 0$, and answer any range query in time ${\cal O}(2^k\binom{k + 1+ \left\lceil \log{n} \right\rceil}{k+1}) = {\cal O}(2^{{\cal O}(k)}n^{\epsilon})$.
\end{proposition}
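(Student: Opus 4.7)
The plan is to recall the classical recursive construction of multidimensional range trees (due to Bentley~\cite{Ben79}) and to derive both the construction-time and query-time bounds from a pair of Pascal-style recurrences, following the presentation in~\cite{BHM18}.

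I would define a $k$-range tree on a point set $P$ of size $n$ in $\mathbb{R}^k$ recursively. For $k=1$, $P$ is simply stored in a balanced binary search tree keyed by its single coordinate. For $k\ge 2$, $P$ is stored in a balanced BST $T_1$ on the first coordinate, and at each internal node $v$ of $T_1$ we attach a $(k-1)$-range tree built on the projection of the canonical subset $P_v$ (the points in $v$'s subtree) to the last $k-1$ coordinates. Two properties are essential for the analysis: the canonical subsets at any fixed depth of $T_1$ partition $P$, and $T_1$ has height $\lceil\log n\rceil$.

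For construction, I would set up the recurrence
\[
S(n,k) \;\le\; 2\, S(n/2,k) \;+\; S(n,k-1) \;+\; O(n),
\]
whose three summands capture, respectively, the two halves of $T_1$ and the $(k-1)$-associated structure at the root; the additive $O(n)$ absorbs the partitioning and pointer work, assuming $P$ is presorted on each coordinate as a one-time preprocessing step costing $O(kn\log n)$. Re-parametrising by $S(2^L,k)=2^L\, s(L,k)$ turns the recurrence into $s(L,k)\le s(L-1,k)+s(L,k-1)+O(1)$, and a direct induction using Pascal's identity gives $s(L,k)=O\bigl(\binom{L+k+1}{k+1}\bigr)$. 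Multiplying back by $n$ and folding in the preprocessing yields the claimed $O\bigl(k\binom{k+1+\lceil\log n\rceil}{k+1}\, n\bigr)$ bound.

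For a range query with coordinate intervals $[l_i;u_i]$, walking from the root of $T_1$ down to the split node and then along the two search paths for $l_1$ and $u_1$ decomposes the slab $\{p\in P : l_1\le p_1\le u_1\}$ into at most $2\lceil\log n\rceil$ canonical subsets, each of which we query recursively on its associated $(k-1)$-range tree with the remaining intervals. This yields the recurrence
\[
Q(n,k) \;\le\; 2\, Q(n/2,k) \;+\; Q(n,k-1) \;+\; O(1),
\]
which the same Pascal unrolling solves as $Q(n,k)=O\bigl(\binom{L+k+1}{k+1}\bigr)$; the additional multiplicative factor $2^k$ records the at-most-two search paths we may follow at each of the $k$ dimensional recursion levels. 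Finally, the asymptotic simplification $\binom{L+k+1}{k+1}\le (L+k+1)^{k+1}$ combined with $L=O(\log n)$ delivers the $2^{O(k)}n^{o(1)}$ rewrites stated in the proposition whenever $k=o(\log n)$.

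The only delicate point is the bookkeeping of the additive $O(n)$ and $O(1)$ terms inside the Pascal unrolling, together with verifying the base cases $S(1,k),\, Q(1,k)=O(k)$ and $S(n,1)=O(n\log n)$; I expect this to be routine, as the binomial solution dominates any polylogarithmic overhead.
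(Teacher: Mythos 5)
The paper itself offers no proof of this proposition: it is imported verbatim from \cite{BHM18}, so your attempt has to be measured against the standard range-tree analysis that the citation encapsulates. Your construction-time half matches that analysis and is sound: $S(n,k)\le 2S(n/2,k)+S(n,k-1)+{\cal O}(n)$ is the right model, and the normalization $S(2^L,k)=2^L s(L,k)$ legitimately removes the factor $2$ (construction cost is measured per point), leaving the unit-coefficient recurrence $s(L,k)\le s(L-1,k)+s(L,k-1)+{\cal O}(1)$ to which Pascal's identity applies.

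The query-time half contains a genuine error. Pascal's identity $\binom{L+k+1}{k+1}=\binom{L+k}{k+1}+\binom{L+k}{k}$ only matches a recurrence with \emph{unit} coefficients, and in your query recurrence $Q(n,k)\le 2Q(n/2,k)+Q(n,k-1)+{\cal O}(1)$ there is no per-point normalization available to absorb the factor $2$: the term $2Q(n/2,k)$ alone unrolls to $2^{\lceil\log n\rceil}=\Theta(n)$ leaves, so already for $k=2$ this recurrence solves to $\Theta(n)$ rather than ${\cal O}\bigl(\binom{\log n+3}{3}\bigr)$. Thus ``the same Pascal unrolling'' is precisely the step that fails. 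The recurrence also does not model the decomposition you describe: covering the slab by at most $2\lceil\log n\rceil$ canonical subsets, each queried in dimension $k-1$, would instead give $Q(n,k)\le 2\lceil\log n\rceil\,Q(n,k-1)+{\cal O}(\log n)$, i.e.\ the classical ${\cal O}((2\log n)^k)$ bound, which loses the $1/k!$-type saving of the binomial and is \emph{not} $2^{{\cal O}(k)}n^{o(1)}$ when $k$ is of order $\log n/\log\log n$. The idea missing from your write-up -- and the point of the refined analysis in \cite{BHM18} -- is that the at most two canonical subsets per level sit in subtrees of \emph{strictly decreasing heights}: writing $Q(h,k)$ for the cost on a subtree of height $h$, the correct recurrence is $Q(L,k)\le 2\sum_{h=0}^{L-1}Q(h,k-1)+{\cal O}(L)$, and induction together with the hockey-stick identity $\sum_{h=0}^{L-1}\binom{h+k}{k}=\binom{L+k}{k+1}$ yields $Q(L,k)={\cal O}\bigl(2^k\binom{L+k+1}{k+1}\bigr)$, with one factor $2$ per \emph{dimension} rather than per level. (Equivalently, split each two-sided query into two one-sided ones; one-sided queries obey the unit-coefficient recurrence $Q_1(L,k)\le Q_1(L-1,k)+Q(L-1,k-1)+{\cal O}(1)$.) A secondary flaw: your closing estimate $\binom{L+k+1}{k+1}\le (L+k+1)^{k+1}$ is too lossy to give the stated rewrite $2^{{\cal O}(k)}n^{o(1)}$ over the whole range of $k$ (at $k\approx\log n/\log\log n$ it only gives $n^{\Theta(1)}$); you need the bound $\binom{a}{b}\le (ea/b)^{b}$, which keeps the $b$ in the denominator.
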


The use of range queries for diameter computation dates back from~\cite{AVW16} (see also~\cite{BHM18,Duc19} for some further applications).
Roughly, if in a graph $G$ we can find a separator $S$ of size at most $k$ that disconnects a diametral pair of $G$, then the idea is to compute a ``distance profile'' for every vertex $v \notin S$ w.r.t. $S$, and to see this profile as a $k$-dimensional point.
We can compute a diametral pair by constructing a $k$-range tree for these points and computing ${\cal O}(kn)$ range queries.
Our present approach is different from the one in~\cite{AVW16} as we define our multi-dimensional points based on some interval representation of the graph rather than on distance profiles, and we use a different type of range query than in~\cite{AVW16}. 

\begin{proofof}{Theorem~\ref{thm:main}}
Let $G=(K \cup S,E)$ be an $n$-vertex split graph, and let us assume to be given a total ordering over $K$ showing that $G$ is $k$-clique-interval.
By the hypothesis for every $v \in S$, we have $N_G(v) = \bigcup_{i=1}^{k} [l_i(v);u_i(v)]$ is the union of $k$ intervals such that $l_1(v) \leq u_1(v) < l_2(v) \leq u_2(v) < \ldots < l_{k}(v) \leq u_{k}(v)$.
Furthermore since the ordering over $K$ is given, the $2k$ endpoints that delimit these intervals can be computed in time ${\cal O}(|N_G(v)|)$, simply by scanning the neighbours of vertex $v$. 
We so map every vertex $v \in S$ to the $2k$-dimensional point $p(v) = (l_1(v),u_1(v),\ldots,l_{k}(v),u_{k}(v))$, that takes total time ${\cal O}(m)$. 
Then, we construct a $2k$-range tree for the points $p(v), v \in S$, that takes time ${\cal O}(k2^{{\cal O}(k)}n^{1+\epsilon})$ for any $\epsilon > 0$ by Proposition~\ref{prop:complexity-range-trees}.

For every $v \in S$, we are left with computing the number of vertices in $S$ at distance two from $v$.
Indeed, $G$ has diameter at most two if and only if this number is $|S|-1$ for every $v \in S$.
For that, we first observe that, for every $w \in S \setminus \{v\}$, we have $dist_G(v,w) = 2$ if and only if there exists an interval $[l_j(w);u_j(w)]$, for some $1 \leq j \leq k$, which contains a vertex of $N_G(v)$. We partition the vertices to count into disjoint subsets $S_1,S_2,\ldots,S_k$ such that, for every $1 \leq j \leq k$, $S_j$ contains all the vertices $w \in S$ which satisfy that: $[l_j(w);u_j(w)]$ contains a vertex of $N_G(v)$, and there is no interval $[l_{j'}(w);u_{j'}(w)]$ with this property for any $1 \leq j' < j$. Then, we further observe that for an interval $[l_j(w);u_j(w)]$ to contain a vertex of $N_G(v)$, it is necessary and sufficient for this interval to intersect $[l_i(v);u_i(v)]$, for some $1 \leq i \leq k$. Therefore, we sub-partition each subset $S_j$ into $S_{1,j}, S_{2,j}, \ldots, S_{k,j}$ such that, for every $1 \leq i \leq k$ and for every $w \in S_{i,j}$, we have: $[l_j(w);u_j(w)]$ and $[l_i(v);u_i(v)]$ intersect, and $[l_j(w);u_j(w)]$ and $[l_{i'}(v);u_{i'}(v)]$ do not intersect for any $1 \leq i' < i$.

Let $w \in S \setminus \{v\}$ be arbitrary and let $1 \leq i,j \leq k$ be fixed. We claim that $w \in S_{i,j}$ if and only if the following three constraints are satisfied:
\begin{enumerate}
\item $[l_i(v);u_i(v)] \cap [l_j(w);u_j(w)] \neq \emptyset$;
\item while for every $1 \leq i' < i$, $[l_{i'}(v);u_{i'}(v)] \cap [l_j(w);u_j(w)] = \emptyset$;
\item and for every $1 \leq i' \leq i$ and $1 \leq j' < j$, $[l_{i'}(v);u_{i'}(v)] \cap [l_{j'}(w);u_{j'}(w)] = \emptyset$.
\end{enumerate}
Indeed, if $w \in S_{i,j}$ then the first and second constraints hold by minimality of index $i$ (for fixed $j$) and the third constraint holds by minimality of index $j$. Conversely, assume the three constraints to hold. The first and second constraints imply that index $i$ the least index $i'$ such that the intervals $[l_j(w);u_j(w)]$ and $[l_{i'}(v);u_{i'}(v)]$ intersect. Therefore, in order to prove that $w \in S_{i,j}$, it suffices to prove that there is no interval $[l_{j'}(w);u_{j'}(w)]$, for any $1 \leq j' < j$, which contains a vertex of $N_G(v)$. Suppose for the sake of contradiction the existence of an interval $[l_{j'}(w);u_{j'}(w)]$, for some $1 \leq j' < j$, which contains a vertex of $N_G(v)$. Let $1 \leq i' \leq k$ be such that $[l_{j'}(w);u_{j'}(w)]$ and $[l_{i'}(v);u_{i'}(v)]$ intersect. The third constraint implies that $i' > i$. But then, since we also have that $[l_j(w);u_j(w)]$ and $[l_i(v);u_i(v)]$ intersect (first constraint), we get $l_{i'}(v) > l_j(w) > u_{j'}(w)$, and therefore $[l_{i'}(v);u_{i'}(v)] \cap [l_{j'}(w);u_{j'}(w)] = \emptyset$, a contradiction.

Next, we define range queries such that, for every $w \in S_{i,j}$, its point $p(w)$ is counted by exactly one of these queries. The first and second constraints are equivalent to one of the following three {\em disjoint} possibilities:
\begin{itemize}
\item $u_{i-1}(v) < l_j(w) \leq l_i(v)$ and $u_i(v) \leq u_j(w)$ (then, we set $[l_{2j-1};u_{2j-1}] = (u_{i-1}(v);l_i(v)]$ and $[l_{2j};u_{2j}]=[u_i(v);+\infty)$ for, respectively the $(2j-1)^{\text{th}}$ and $2j^{\text{th}}$ intervals of the range query);
\item or $u_{i-1}(v) < l_j(w) \leq l_i(v)$ and $l_i(v) \leq u_j(w) < u_i(v)$ (then, we set $[l_{2j-1};u_{2j-1}] = (u_{i-1}(v);l_i(v)]$ and $[l_{2j};u_{2j}]=[l_i(v);u_i(v))$ for, respectively the $(2j-1)^{\text{th}}$ and $2j^{\text{th}}$ intervals of the range query);
\item or $l_i(v) < l_j(w) \leq u_i(v)$ (then, we set $[l_{2j-1};u_{2j-1}] = (l_i(v);u_i(v)]$ for the $(2j-1)^{\text{th}}$ interval of the range query).
\end{itemize}
The third constraint is equivalent to have $\bigcup_{j' < j} [l_{j'}(w);u_{j'}(w)] \subseteq (-\infty,l_1(v)) \cup \left( \bigcup_{i' < i} (u_{i'}(v),l_{i'+1}(v)) \right)$.
Note that in order to subdivide this constraint into disjoint possibilities, it suffices to indicate: \texttt{(i)} the subset of all intervals among $(-\infty,l_1(v)) \cup \left( \bigcup_{i' < i} (u_{i'}(v),l_{i'+1}(v)) \right)$ that contain an interval $[l_{j'}(w);u_{j'}(w)]$ for some $j' < j$;
and (ii) the set of all indices $j' \in (1;j)$  such that $[l_{j'-1}(w);u_{j'-1}(w)]$ and $[l_{j'}(w);u_{j'}(w)]$ are {\em not} contained in the same such interval.
Note also that, for \texttt{(i)}, there are $2^i$ possibilities (because we must choose a subset of a family of $i$ intervals), while for \texttt{(ii)} there are at most $2^{j-2}$ possibilities (because we must choose a subset of all indices $j'$ between $2$ and $j-1$).
Overall, that divides the third constraint in at most $2^i2^{j-2} \leq 2^{2k-2}$ disjoint events. 
Furthermore, for any such event, let us write $(a_1;b_1), (a_2;b_2), \ldots, (a_r;b_r)$ the selected intervals, resp. $j_1,j_2,\ldots,j_{r-1}$ the selected indices. Set $j_0 = 1, \ j_r = j$. Then, for every $1 \leq q \leq r$, all the intervals $[l_{j'}(w);u_{j'}(w)]$, for $j_{q-1} \leq j' \leq j_q-1$ must be contained into $(a_q;b_q)$. We set $[l_{2j'-1};u_{2j'-1}] = [l_{2j'};u_{2j'}] = (a_q;b_q)$ for the $(2j'-1)^{\text{th}}$ and ${2j'}^{\text{th}}$ intervals of the range query.

For a fixed pair $(i,j)$ we so reduce our computation to at most $3 \cdot 2^{2k-2}$ range queries {(obtained by the combination of one of the three disjoint possibilities for the two first constraints with one of the at most $2^{2k-2}$ possibilities for the third constraint)}, that takes time ${\cal O}(2^{{\cal O}(k)}n^{\epsilon})$ for any $\epsilon > 0$ by Proposition~\ref{prop:complexity-range-trees}.
Since there are ${\cal O}(k^2)$ such pairs, the total time in order to compute the number of vertices in $S$ at distance two from $v$ is in ${\cal O}(k^22^{{\cal O}(k)}n^{\epsilon})$ for any $\epsilon > 0$.

Finally, the hardness result for $k = \omega(\log{n})$ follows from the fact that $k$-treewidth split graphs are $k$-clique-interval (Lemma~\ref{lem:bounded-tw}) and that under SETH, we cannot compute the diameter of split graphs with treewidth $\omega(\log{n})$ in subquadratic time~\cite{AVW16,BCH16}.
\end{proofof}

Before ending this section, we give a simpler algorithm for computing the diameter on the {\em complements} of $k$-clique-interval split graphs.
It is similar in spirit to~\cite[Lemma 6]{DHV20}.

\begin{theorem}\label{thm:main-complement}
If $G=(K \cup S,E)$ is { an $m$-edge split graph, and we are given a total ordering over $S$ showing that $G$ is the complement of a $k$-clique-interval split graph} then, we can compute the diameter of $G$ in time ${\cal O}(km)$.
\end{theorem}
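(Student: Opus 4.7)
The plan is to reduce the diameter problem, similarly to the proof of Theorem~\ref{thm:main}, to deciding whether every pair of vertices of $S$ shares a common neighbour in $K$. Since $G$ is connected and $K$ is a dominating clique, every $v\in K$ has eccentricity at most $2$, so $diam(G)>2$ precisely when some two vertices of $S$ have no common $K$-neighbour. By hypothesis, for every $v\in K$ the non-neighbourhood $S\setminus N_G(v)$ is a union of at most $k$ intervals of the given ordering $\tau$ of $S$, so $A_v:=N_G(v)\cap S$ decomposes into $p_v\leq k+1$ maximal $\tau$-intervals $I_{v,1}<\cdots<I_{v,p_v}$ with endpoints $l_{v,i},r_{v,i}$. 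Write $V_j:=N_G(s_j)\cap K$ for every $j\in\{1,\ldots,|S|\}$.

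My first step would be an ${\cal O}(m)$-time preprocessing that computes, for every $v\in K$, the $\tau$-sorted list $A_v$ (by iterating over $s_j$ in $\tau$-order and appending $s_j$ to $A_v$ for every $v\in V_j$) and its decomposition into the $p_v$ intervals; and, for every $j$, the list $R_j=\{v\in K:\exists i,\ l_{v,i}=j+1\}=V_{j+1}\setminus V_j$ and the list $L_j=\{v\in K:\exists i,\ r_{v,i}=j\}=V_j\setminus V_{j+1}$ of $K$-vertices entering, resp.\ leaving, upon the transition $j\to j+1$.

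Next, I would sweep $j=0,1,\ldots,|S|$ while maintaining an integer array $c[1..|S|]$ with invariant $c[p]=|V_j\cap V_p|$, together with the counter $N_j=|\{p:c[p]\geq 1\}|$. At each transition $j\to j+1$, for every $v\in R_j$ I scan $A_v$ and increment each $c[p]$ for $p\in A_v$ (incrementing $N_j$ whenever $c[p]$ turns positive from $0$), and symmetrically decrement for every $v\in L_j$. At each position $j\geq 1$, the test $N_j=|S|$ decides whether every $s_p\in S$ shares a common $K$-neighbour with $s_j$, equivalently whether the eccentricity of $s_j$ is at most $2$; by the initial reduction, $diam(G)\leq 2$ iff the test passes for every $j$, and any failing pair $(j,p)$ with $c[p]=0$ immediately certifies $diam(G)=3$.

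The running-time analysis relies on the observation that each $v\in K$ enters exactly $p_v$ of the lists $R_j$ and exactly $p_v$ of the lists $L_j$, hence it participates in at most $2p_v$ flips throughout the sweep, each costing $|A_v|$ counter updates. Therefore the total update cost is bounded by
\[
\sum_{v\in K}2p_v\cdot|A_v|\ \leq\ 2(k+1)\sum_{v\in K}|A_v|\ \leq\ 2(k+1)m\ =\ {\cal O}(km),
\]
since $\sum_v|A_v|$ counts the $K$-$S$ edges of $G$. Combined with the ${\cal O}(m)$ preprocessing, this yields a total running time of ${\cal O}(km)$. The main subtlety I foresee is resisting the temptation to materialize $T_{s_j}:=\bigcup_{v\in V_j}A_v$ explicitly as a union of intervals, since a naive $k$-way merge over the $V_j$ would introduce an extra $\log n$ factor; the counter-array amortization through the sweep is precisely what compresses that factor and yields the tight ${\cal O}(km)$ bound.
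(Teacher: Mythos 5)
Your proof is correct, but it is a genuinely different algorithm from the paper's. Both arguments start from the same reduction: every vertex of the maximal clique $K$ has eccentricity at most two, so $diam(G)\leq 2$ iff every $s_j\in S$ has a common $K$-neighbour with every other vertex of $S$. From there the paper treats each $w\in S$ \emph{independently}: it notes that $w$ has eccentricity at most two iff $\bigcup_{v\in N_G(w)}N_G[v]=V$, i.e.\ iff the at most $(k+1)|N_G(w)|$ intervals $[l_i(v);u_i(v)]$, $v\in N_G(w)$, cover all of $S$, and it checks this coverage by sorting those intervals by left endpoint and scanning, claiming ${\cal O}(k|N_G(w)|)$ time per vertex. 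You instead make a single sweep over the ordering $\tau$, maintaining the common-neighbour counts $c[p]=|V_j\cap V_p|$ and the number $N_j$ of nonzero entries, and you charge all counter updates to enter/leave events, bounded by $\sum_{v\in K}2p_v|A_v|\leq 2(k+1)m$. So the two proofs exploit the hypothesis differently: the paper uses the interval structure of each $N_G(v)\cap S$ directly inside a per-vertex covering test, whereas you use it only to bound the total size of the symmetric differences $V_j\,\triangle\,V_{j+1}$ accumulated along the sweep. Your variant has a concrete advantage: the paper's assertion that the sorting step takes ${\cal O}(k|N_G(w)|)$ (with no logarithmic overhead) is glossed over and requires an implementation detail, such as one global counting sort of all (vertex, interval) pairs by left endpoint, to avoid an extra $\log$ factor from comparison sorting or $k$-way merging; your counter-array amortization sidesteps sorting entirely, exactly as you point out at the end, and additionally yields an explicit witness pair when the diameter is three. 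What the paper's version buys in exchange is simplicity of state: each vertex $w$ is processed independently with no global array and no deletion events, which makes the correctness argument shorter.
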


\begin{proof}
By the hypothesis for every $v \in K$, we have $N_{\overline{G}}(v) \cap S$ is the union of $k$ intervals.
In particular, $N_{G}(v) \cap S = \bigcup_{i=1}^{k+1} [l_i(v);u_i(v)]$ is the union of $k+1$ intervals such that $l_1(v) \leq u_1(v) \leq l_2(v) \leq u_2(v) \leq \ldots \leq l_{k+1}(v) \leq u_{k+1}(v)$.
Furthermore since the ordering over $S$ is given, the $2k+2$ endpoints that delimit these intervals can be computed in time ${\cal O}(|N_G(v)|)$, simply by scanning the neighbours of vertex $v$. 
Overall, this pre-processing phase takes total time ${\cal O}(m)$.
Then in order to compute the diameter of $G$, for every $w \in S$ we store the $2k+2$ endpoints $l_1(v),u_1(v),l_2(v),u_2(v), \ldots,l_{k+1}(v),u_{k+1}(v)$ for every $v \in N_G(w)$.
This takes time ${\cal O}(k|N_G(w)|)$, and so total time ${\cal O}(km)$.
Furthermore a vertex $w \in S$ has eccentricity at most two if and only if we have $\bigcup_{v \in N_G(w)} N_G[v] = V$, that is equivalent to having $\bigcup_{v \in N_G(w)}\bigcup_{i=1}^{k+1} [l_i(v);u_i(v)] = S$. 
In order to check whether this collection of intervals covers all of $S$, it suffices to sort the pairs $(l_i(v),u_i(v))$ for all $v \in N_G(w)$ and $1 \leq i \leq k+1$, and then to scan these ordered pairs from left to right.
This can be done in time ${\cal O}(k|N_G(w)|)$.
As a result, we can decide whether $diam(G) \leq 2$ in total time ${\cal O}(km)$.
\end{proof}

\section{Recognition of $k$-Clique-interval Split graphs}\label{sec:recognition}

Our two algorithms in Sec.~\ref{sec:gal-case} show that in order to compute the diameter of $k$-clique-interval split graphs in quasi linear time, it is sufficient to compute a corresponding total order of their maximal clique.
This raises the question whether such $k$-clique-interval orderings can always be computed in quasi linear time.
A first positive example was given by Lemma~\ref{lem:bounded-tw}.
Indeed, for a split graph of treewidth at most $k$, we can pick {\em any} total order of its maximal clique.
We complete this easy result by Sec.~\ref{sec:dense} where we give examples of dense subclasses of split graphs with constant clique-interval number and for which a corresponding order can be computed in linear time.
Finally, in Sec.~\ref{sec:main-recognition} we prove a stronger result, namely that we can recognize the clique-interval { split} graphs in linear time.

\subsection{Examples of subclasses with bounded clique-interval number}\label{sec:dense}

A {\em threshold graph} is a split graph $G=(K \cup S,E)$ such that: \texttt{(i)} the neighbourhoods of the vertices in $K$ and \texttt{(ii)} the neighbourhoods of the vertices in $S$ are totally ordered by inclusion.
Observe that threshold graphs can be dense and of unbounded treewidth.

\begin{lemma}\label{lem:threshold}
Every threshold graph is { a clique-interval split graph}.
\end{lemma}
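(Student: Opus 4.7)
The plan is to produce an explicit total order on $K$ that witnesses the $1$-clique-interval property, starting from the inclusion-ordered neighbourhoods guaranteed by the threshold condition.

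First I would observe that for any two distinct $u,u' \in K$ the open neighbourhoods $N(u)$ and $N(u')$ both contain $K \setminus \{u,u'\}$ but are incomparable on the clique part (each contains the other vertex, but not itself). Consequently, property (i) in the definition of a threshold graph really says that the traces $\{N(u) \cap S \mid u \in K\}$ are linearly ordered by inclusion. Using this, I would enumerate $K = \{u_1, u_2, \ldots, u_{|K|}\}$ in an order $\tau$ for which
\[
N(u_1) \cap S \;\supseteq\; N(u_2) \cap S \;\supseteq\; \cdots \;\supseteq\; N(u_{|K|}) \cap S.
\]

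Next I would fix an arbitrary $v \in S$ and check that $N(v)$ is an interval of $\tau$. Since $S$ is stable, $N(v) \subseteq K$, and membership $u_i \in N(v)$ is equivalent to $v \in N(u_i) \cap S$. The nesting above then makes this membership monotone in $i$: if $u_i \in N(v)$ and $j \le i$, then $v \in N(u_i) \cap S \subseteq N(u_j) \cap S$, hence $u_j \in N(v)$. Therefore $N(v) = \{u_1, u_2, \ldots, u_{r_v}\}$ is a prefix of $\tau$ (possibly empty), which is trivially a single interval. This shows $G$ is $1$-clique-interval, as required.

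The whole argument is essentially immediate once the right reformulation is made; the only subtle point, and the step I would want to state carefully, is the reduction of property (i) from full open neighbourhoods of clique vertices to their restrictions to $S$. After that observation the verification that $N(v)$ is a prefix of $\tau$ is routine and no further obstacle is expected.
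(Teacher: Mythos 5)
Your proof is correct, but it runs ``dual'' to the paper's. The paper invokes property \texttt{(ii)} of the threshold definition: it takes the nested chain $N_G(v_1) \subseteq N_G(v_2) \subseteq \cdots \subseteq N_G(v_p)$ of stable-set neighbourhoods and orders $K$ so that the difference sets $N_G(v_1),\, N_G(v_2) \setminus N_G(v_1),\, \ldots,\, K \setminus N_G(v_p)$ appear as consecutive blocks, whence each $N_G(v_i)$ is a prefix of the order. You instead invoke property \texttt{(i)}: you sort the clique vertices by reverse inclusion of their traces $N(u) \cap S$ and check that membership of $u_i$ in $N(v)$ is monotone in $i$, so again each $N(v)$ is a prefix. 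The two orderings coincide up to permutations inside the paper's difference blocks, so neither argument buys more power than the other, and both yield the linear-time computation of a clique-interval ordering that the paper notes right after the lemma (yours via sorting clique vertices by trace size, which is a degree sort). One genuine added value of your write-up is the observation that property \texttt{(i)} cannot literally hold for open neighbourhoods of distinct clique vertices (each contains the other vertex but not itself), so the definition must be read as referring to closed neighbourhoods or, equivalently, to traces on $S$; the paper glosses over this point, and making it explicit is a small but real clarification. For completeness you could also note that for split graphs properties \texttt{(i)} and \texttt{(ii)} are equivalent, which is why each proof can get away with using only one of them.
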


\begin{proof}
For a threshold graph $G=(K \cup S,E)$ let $S=(v_1,v_2,\ldots,v_p)$ be such that $N_G(v_1) \subseteq N_G(v_2) \subseteq \ldots \subseteq N_G(v_p)$.
In order to prove that $G$ is { a clique-interval split graph}, it suffices to construct any total order of $K$ such that the subsets $N_G(v_1), N_G(v_2) \setminus N_G(v_1), \ldots, N_G(v_i) \setminus N_G(v_{i-1}), \ldots, N_G(v_p) \setminus N_G(v_{p-1}), K \setminus N_G(v_p)$ are consecutive intervals. 
\end{proof}

Note that we can easily derive from the proof of Lemma~\ref{lem:threshold} a linear-time algorithm for computing a clique-interval ordering.

Finally, a {\em comparability graph} is a graph that admits a transitive orientation.
{Let $K=(u_1,u_2,\ldots,u_{|K|})$ be totally ordered. In what follows, a prefix of this order (resp., a suffix) is any subset of consecutive vertices $u_1,u_2,\ldots,u_i$ (resp., $u_i,u_{i+1},\ldots,u_{|K|}$) for some $i$.}

\begin{lemma}\label{lem:comparability}
For every comparability split graph $G=(K \cup S,E)$, we can compute in linear time a total order over $K$ such that, for every $v \in S$, $N_G(v)$ is the union of a prefix and a suffix of this order.

In particular, every comparability split graph is $2$-clique-interval.
\end{lemma}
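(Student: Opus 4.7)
The plan is to derive the desired ordering on $K$ directly from a transitive orientation of $G$. First, I would invoke a known linear-time algorithm (for instance, the modular-decomposition-based approach of McConnell and Spinrad) to compute a transitive orientation $F$ of $G$. Since $K$ is a clique of $G$, the restriction of $F$ to $K$ is a transitive tournament, which therefore defines a linear order $u_1 < u_2 < \cdots < u_{|K|}$ on $K$; this is the order we will return. Extracting this order from $F$ amounts to a topological sort of the restriction of $F$ to $K$, and so fits within $O(m)$ time.

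The heart of the argument is then a short structural claim about the transitive orientation. Fix $v \in S$ and recall that $N_G(v) \subseteq K$ since $S$ is independent. Partition $N_G(v)$ into $A(v) = \{u_i : u_i \to v \text{ in } F\}$ and $B(v) = \{u_i : v \to u_i \text{ in } F\}$. I claim $A(v)$ is a prefix of the order on $K$. Indeed, if $u_i \in A(v)$ and $j < i$, then $u_j \to u_i$ in $F$ by definition of the order and $u_i \to v$ in $F$ by definition of $A(v)$, so transitivity of $F$ yields $u_j \to v$, hence $u_j \in A(v)$. By the same argument applied to the reversed orientation, $B(v)$ is a suffix. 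Therefore $N_G(v) = A(v) \cup B(v)$ is the union of a prefix and a suffix of the order on $K$, and in particular the union of at most two intervals, establishing the $2$-clique-interval conclusion.

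The main technical obstacle is the linear-time computation of the transitive orientation itself, since the structural argument above is essentially a two-line consequence of transitivity once $F$ is in hand. For comparability graphs such a linear-time algorithm is known but delicate; importantly, we do not need to verify that the input is a comparability graph, only to produce some transitive orientation, which is the usual output of these algorithms when the input is promised to be a comparability graph. Once $F$ is available, no further non-trivial computation is required: the order on $K$ is read off by topological sort, and the prefix/suffix decomposition of each $N_G(v)$ need not even be produced explicitly, so the overall running time matches the cost of producing $F$.
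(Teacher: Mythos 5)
Your proof is correct and takes essentially the same route as the paper's: the paper computes a \emph{comparability ordering} (a linear extension of a transitive orientation) of $G$ via McConnell--Spinrad and restricts it to $K$, whereas you work with the transitive orientation itself and read the order on $K$ off the induced transitive tournament, which is an equivalent formulation. Your two-line transitivity argument (neighbours oriented towards $v$ form a prefix, those oriented away form a suffix) is exactly the paper's argument for its sets $N_L(v)$ and $N_R(v)$.
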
 

\begin{proof}
A {\em comparability ordering} of $G$ is a total order $\prec$ over $V = K \cup S$ with the property that, for every $u \prec v \prec w$, $uv,vw \in E \Longrightarrow uw \in E$~\cite{KrS93}.
For a given comparability graph $G$, we can compute a comparability ordering $\prec$ in linear time~\cite{McS97}.
Then, let $\prec_K$ be the subordering induced by $\prec$ over $K$.
For every $v \in S$, we claim that $N_L(v) = \{ w \in K \mid w \prec v \}$ is a (possibly empty) prefix of $\prec_K$.
Indeed, suppose by contradiction there exist {$w,w' \in K$,} $w \in N_L(v), w' \notin N_L(v)$ such that $w ' \prec w \prec v$.
Since $w'w, wv \in E$ we should have $w'v \in E$, a contradiction.
Therefore, the claim is proved.
We can prove similarly that $N_R(v) = \{ w \in K \mid v \prec w \}$ is a (possibly empty) suffix of $\prec_K$. 
\end{proof}

We also want to stress that the complements of comparability split graphs, i.e., the cocomparability split graphs are just interval split graphs and we have already considered this case in Section~\ref{sec:univ-vertex}.

{
\begin{corollary}
The diameter of comparability split graphs can be computed in linear time.
\end{corollary}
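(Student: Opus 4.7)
\medskip\noindent\emph{Proof proposal.} The plan is to exploit the very restricted structure provided by Lemma~\ref{lem:comparability} and avoid the overhead of the general $k$-range-tree machinery of Theorem~\ref{thm:main} (which, for $k=2$, would only yield a quasi-linear bound). First I would apply Lemma~\ref{lem:comparability} in ${\cal O}(n+m)$ time to obtain a total order $u_1 \prec u_2 \prec \cdots \prec u_{|K|}$ of $K$ such that every neighbourhood $N_G(v)$ with $v \in S$ is a prefix plus a suffix $\{u_1,\ldots,u_{a(v)}\} \cup \{u_{b(v)},\ldots,u_{|K|}\}$; a single scan of each $N_G(v)$ reads off $a(v)$ and $b(v)$ in time ${\cal O}(|N_G(v)|)$, so the gap $I(v) = [u_{a(v)+1}, u_{b(v)-1}]$ is a single (possibly empty) interval of the ordering, computed across all $v \in S$ in ${\cal O}(m)$ total time.

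Next, since $K$ is a dominating clique, every vertex of $K$ already has eccentricity at most $2$, so $\mathrm{diam}(G) \leq 2$ iff every two vertices $v,w \in S$ share a neighbour in $K$. This common-neighbour condition fails for the pair $\{v,w\}$ iff $I(v) \cup I(w) = K$, i.e.\ the two intervals cover the whole ordered clique. Two intervals of a line can cover $K$ only if one of them contains $u_1$ and the other contains $u_{|K|}$ (and they overlap or abut); since $G$ is connected, no $I(v)$ equals all of $K$, so this reduces to the search for a pair $(v,w) \in A \times B$, where $A = \{v \in S : a(v) = 0\}$ gathers the vertices whose gap is a prefix and $B = \{w \in S : b(w) = |K|+1\}$ gathers those whose gap is a suffix (note $A \cap B = \emptyset$, as a vertex in both would be isolated).

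A short case analysis should then show that $(v,w) \in A \times B$ is a bad pair precisely when $b(v) - 1 \geq a(w)$. Consequently, $\mathrm{diam}(G) \geq 3$ iff $\max_{v \in A} b(v) > \min_{w \in B} a(w)$, using the convention that extrema over empty sets are $-\infty$ and $+\infty$ respectively. Both extrema can be produced by a single scan of $S$ in ${\cal O}(|S|)$ time, giving overall complexity ${\cal O}(n+m)$. If the test reports $\mathrm{diam}(G) \leq 2$, one separates diameter $1$ from diameter $2$ in linear time by checking whether $G$ is complete.

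The only step demanding some care is the case analysis ruling out ``same-side'' pairs, namely that two gap intervals which are both prefixes, both suffixes, or both strictly interior cannot cover $K$ unless one is already equal to $K$; this reduces to the easy observation that the union of two prefixes is a prefix (and analogously for suffixes and for intervals avoiding both endpoints of the order), so connectivity of $G$ eliminates these cases. Everything else is a direct bookkeeping exercise, and no data structure beyond two arrays of integers indexed by $S$ is required.
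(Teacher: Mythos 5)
Your proof is correct and follows essentially the same route as the paper: after invoking Lemma~\ref{lem:comparability}, both arguments observe that the only pairs of stable-set vertices that can fail to share a common neighbour are those pairing a prefix-only neighbourhood with a suffix-only neighbourhood (your sets $B$ and $A$ are exactly the paper's $S_L$ and $S_R$), and both reduce the check to the extremal such pair, computable in a single linear scan. Your complementary ``gap interval'' formulation is just a cosmetic rephrasing of the paper's direct argument that nested prefixes (resp.\ suffixes) always intersect.
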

\begin{proof}
We first apply Lemma~\ref{lem:comparability}, of which we reuse in what follows the notations that were locally defined in its proof. 
For every $x,y \in S$ such that $N_L(x)$ and $N_L(y)$ are non-empty we have either $N_L(x) \subseteq N_L(y)$ or $N_L(y) \subseteq N_L(x)$ because these are prefixes of the ordering over $K$.
In particular, $x$ and $y$ are at distance two from each other.
The same holds if both $N_R(x)$ and $N_R(y)$ are non-empty.
Hence, let $S_L = \{ x \in S \mid N_R(x) = \emptyset \}$ and $S_R = \{ y \in S \mid N_L(y) = \emptyset \}$.
We are left deciding whether every $x \in S_L, y \in S_R$ are pairwise at distance two.
For that, since all the sets $N_L(x)$ are prefixes of the total order over $K$, and in the same way all the sets $N_R(y)$ are suffixes of this ordering, we only need to consider a pair $x,y$ such that $|N_L(x)|$ and $|N_R(y)|$ are minimized.
\end{proof}
}

\subsection{Linear-time recognition of Clique-interval { split} graphs}\label{sec:main-recognition}

\begin{theorem}\label{thm:clique-interval-recognition}
Clique-interval split graphs can be recognized in linear time.
\end{theorem}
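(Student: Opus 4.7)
The plan is to reduce clique-interval recognition to testing the \emph{Consecutive Ones Property} (C1P) of a $0/1$ matrix, which is known to be solvable in linear time via the PQ-tree data structure of Booth and Lueker.

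First, I would compute a split partition $(K,S)$ of the input graph $G$ in linear time using Golumbic's algorithm. I would then form the $|S|\times|K|$ biadjacency matrix $M$ whose row indexed by $v\in S$ and column indexed by $u\in K$ contains a $1$ iff $uv\in E$. By the very definition of the clique-interval property, once $(K,S)$ is fixed, $G$ is clique-interval if and only if there exists a permutation of the columns of $M$ after which the $1$-entries of every row form a contiguous block, i.e., $M$ has the C1P with respect to its rows. Observe that the number of $1$-entries of $M$ equals $|E(K,S)|\leq m$, and that the matrix has $|K|+|S|=n$ row/column indices, so the total encoding size of $M$ is linear in the size of $G$. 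Booth–Lueker's PQ-tree algorithm tests the C1P and, if the answer is positive, outputs a valid column ordering (i.e., an ordering of $K$) in time ${\cal O}(n+m+r)$ where $r$ is the number of $1$-entries. Applied to $M$, this runs in linear time and yields the desired clique-interval ordering or a witness that none exists.

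The main obstacle I anticipate is that the split partition of a split graph is not always unique, and the definition of clique-interval is stated relative to a chosen partition into a (maximal) clique and a stable set. By the Hammer–Simeone characterization based on the degree sequence, however, there are only ${\cal O}(1)$ valid split partitions of $G$, and any two of them differ only by the reassignment of a constant number of "swap" vertices between $K$ and $S$. To handle this, I would enumerate all these ${\cal O}(1)$ partitions, run the C1P test on each of them, and accept iff at least one of these runs succeeds; the total running time remains linear. Verifying correctness of this enumeration step, i.e., checking that no other partition could possibly witness the clique-interval property, together with careful bookkeeping for the swap vertices, is the only non-routine part of the argument; the reduction to C1P itself and the call to Booth–Lueker are immediate.
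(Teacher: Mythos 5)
Your core reduction is correct and is, in substance, the same test the paper performs: asking for an ordering of $K$ that makes every $N(v)$, $v \in S$, consecutive is exactly the consecutive-ones property (C1P) of the $S \times K$ biadjacency matrix. The paper reaches the same condition by a detour: it forms the set family ${\cal C} = \{ \{u\} \cup \left( N_G(u) \cap S \right) \mid u \in K \}$, argues that $G$ is clique-interval if and only if ${\cal C}$ is the family of maximal cliques of an interval graph (a clique-path of that graph is precisely a valid ordering of $K$), and tests this with Lex-BFS-based algorithms and clique trees so as to avoid building the derived graph $G^+$ explicitly. Your direct call to Booth--Lueker on the sparse matrix gets the same ${\cal O}(n+m)$ bound with less machinery and also returns a witness ordering; since interval-graph recognition in the style of Fulkerson--Gross and Booth--Lueker is itself C1P on the clique-vertex incidence matrix, the two routes are essentially equivalent, yours being the more direct.

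There is, however, a genuine error in your handling of the non-uniqueness of the split partition: it is false that a split graph has ${\cal O}(1)$ split partitions. A complete graph $K_n$ has $n+1$ of them ($S=\emptyset$, or $S=\{v\}$ for any vertex $v$), and a star $K_{1,n}$ has $n$ partitions even if one insists that $K$ be a maximal clique (the center together with any one leaf). What the Hammer--Simeone theory gives is only that any two split partitions $(K_1,S_1)$ and $(K_2,S_2)$ satisfy $|K_1 \setminus K_2| \leq 1$ and $|K_2 \setminus K_1| \leq 1$; the number of partitions can still be $\Theta(n)$, so ``enumerate all partitions and run C1P on each'' is not a linear-time step, and your acceptance criterion is left unjustified. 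The repair is to prove that the answer does not depend on the chosen partition, so that testing the single partition returned by Golumbic's algorithm suffices (which is what the paper implicitly does, its definition being stated relative to a fixed partition). This invariance follows easily from the containment bound above: suppose $\tau_1$ orders $K_1$ so that every $N(u)$, $u \in S_1$, is an interval; restrict $\tau_1$ to $K_1 \cap K_2$ and append the at most one vertex $y \in K_2 \setminus K_1$ at the end. Every $u \in S_1 \cap S_2$ is non-adjacent to $y$ (both lie in the stable set $S_1$), and deleting a point from a linear order keeps intervals intervals, so $N(u) \cap K_2$ remains an interval; and the at most one vertex $x \in K_1 \setminus K_2$, now on the stable side, satisfies $N(x) \cap K_2 \in \{K_1 \cap K_2, \ K_2\}$, which is a prefix or the whole order, hence an interval. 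The converse direction is symmetric. With this lemma substituted for your enumeration step, your proof is complete and linear.
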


\begin{proof}
Let $G=(K \cup S,E)$ be a split graph.
We define a graph $G^+$ from $G$ by first transforming $K$ into a stable set and then, for every $u \in K$, making a clique of $N_G(u) \cap S$.
Furthermore, we claim that $G$ is { a clique-interval split graph} if and only if $G^+$ is interval.
To see that, let us call {\em clique-path} of a graph $G'$ an ordering of its maximal cliques such that, for any vertex $v$ of $G'$, the maximal cliques containing $v$ are contiguous in the ordering; a graph is interval if and only if it admits a clique-path~\cite{BlairPeyton93,FuG65}.
We can now prove our claim, as follows:
\begin{itemize}
\item If $G$ is { a clique-interval split graph} then, any clique-ordering over $K$ for $G$ is a total ordering over the maximal cliques $\{u\} \cup \left( N_G(u) \cap S \right), \ u \in K$, for $G^+$.
Furthermore as already observed in the proof of Proposition~\ref{prop:clique-interval} the vertices in a subset $S' \subseteq S$ are pairwise at distance two in $G$ if and only if they have a common neighbour in $K$.
As a result, the maximal cliques of $G^+$ are exactly the sets $\{u\} \cup \left( N_G(u) \cap S \right), \ u \in K$, and so $G^+$ admits a clique-path.
This implies that $G^+$ is an interval graph.
\item Conversely, if $G^+$ is an interval graph then any clique-path of $G^+$ induces a total ordering over the maximal cliques $\{u\} \cup \left( N_G(u) \cap S \right), \ u \in K${. Since every vertex $u \in K$ is contained in exactly one such maximal clique, we get a total ordering $\tau$ over $K$. For every $v \in S$, we have that $N_G(v)$ consists of an interval of vertices in $\tau$ because the maximal cliques $\{u\} \cup \left( N_G(u) \cap S \right), \ u \in N_G(v)$ must be consecutive in the clique-path of $G^+$. Therefore, $\tau$ is a clique-ordering over $K$ for $G$.} 
\end{itemize}

Unfortunately, computing the graph $G^+$ may take super-linear time.
We can overcome this issue as follows.
First, given a family ${\cal C}$ of subsets over $V$, if there exists a chordal graph $G_{{\cal C}}$ whose maximal cliques are exactly those in ${\cal C}$ then, we can compute a Lex-BFS ordering of $G_{{\cal C}}$ in time ${\cal O}(\sum_{C \in {\cal C}}|C|)$ (Algorithm 10 in~\cite{HMPV00}).
Moreover we can also compute a clique-tree of $G_{{\cal C}}$ in time ${\cal O}(\sum_{C \in {\cal C}}|C|)$~\cite[Sec. 3]{TaY84}.
Finally given this Lex-BFS ordering and the corresponding clique-tree of $G_{{\cal C}}$, we can apply Algorithm 9 from~\cite{HMPV00} in order to decide in time ${\cal O}(\sum_{C \in {\cal C}}|C|)$ whether $G_{\cal C}$ is interval.
For solving our initial problem, we can take ${\cal C} = \{ \{u\} \cup \left( N_G(u) \cap S \right) \mid u \in K \}$, and in this situation we have $\sum_{C \in {\cal C}}|C| = {\cal O}(n+m)$. 
\end{proof}

We left open the status of the recognition of $k$-clique-interval split graphs, for $ k \geq 2$.

\section{Open problems}\label{sec:open-pb}

Although the definitions of $k$-clique-interval and $k$-interval split graphs have some similarities, we observe that computing the diameter of $2$-interval split graphs in quasi linear time already looks like a challenging task.
Indeed, for a $2$-interval split graph $G=(K \cup S,E)$ and $v \in S$, the vertices $u \in S$ at distance two from $v$ are exactly those such that one of their $2$ intervals intersects one of the $2|N_G(v)|$ intervals that represent the neighbours of $v$.
We cannot use our range query framework in order to avoid overcounting these vertices as this would require up to $2^{{\cal O}(|N_G(v)|)}$ range queries. 
More generally, for every fixed $k > 1$, {\em can we compute the diameter of $k$-interval graphs in quasi linear time?}
We stress that every planar graph is $3$-interval~\cite{ScW83}, and that the complexity of diameter computation on this class of graphs is a longstanding open problem. The case $k=2$ could thus be an interesting intermediate step.

\nocite{*}
\bibliographystyle{abbrvnat}
\bibliography{split-biblio}
\label{sec:biblio}

\end{document}